\providecommand{\tabularnewline}{\\}
\theoremstyle{plain}
\newtheorem{thm}{\protect\theoremname}[section]
\theoremstyle{plain}
\newtheorem{assumption}[thm]{\protect\assumptionname}
\theoremstyle{plain}
\newtheorem{lem}[thm]{\protect\lemmaname}
\definecolor{mygray}{rgb}{0.6,0.6,0.6}
\newcommand{\mred}[1]{\textcolor{red}{\ensuremath{#1}}}
\newcommand{\mblue}[1]{\textcolor{blue}{\ensuremath{#1}}}
\newcommand{\mpur}[1]{\textcolor{purple}{\ensuremath{#1}}}
\newcommand{\tpur}[1]{\textcolor{purple}{#1}}
\newcommand*\bigcdot{\mathpalette\bigcdot@{.8}}
\newcommand*\bigcdot@[2]{\mathbin{\vcenter{\hbox{\scalebox{#2}{$\m@th#1\bullet$}}}}}
\author[1]{Aaron Fisher}
\affil[1]{Foundation Medicine Inc.}
\providecommand{\assumptionname}{Assumption}
\providecommand{\lemmaname}{Lemma}
\providecommand{\theoremname}{Theorem}
\begin{document}
\twocolumn
\title{The Connection Between R-Learning and Inverse-Variance Weighting for
Estimation of Heterogeneous Treatment Effects}
\maketitle
\begin{abstract}
Many methods for estimating conditional average treatment effects
(CATEs) can be expressed as weighted pseudo-outcome regressions (PORs).
Previous comparisons of POR techniques have paid careful attention
to the choice of pseudo-outcome transformation. However, we argue
that the dominant driver of performance is actually the choice of
\emph{weights.} For example, we point out that R-Learning implicitly
performs a POR with \emph{inverse-variance} \emph{weights} (IVWs).
In the CATE setting, IVWs mitigate the instability associated with
inverse-propensity weights, and lead to convenient simplifications
of bias terms. We demonstrate the superior performance of IVWs in
simulations, and derive convergence rates for IVWs that are, to our
knowledge, the fastest yet shown without assuming knowledge of the
covariate distribution.
\end{abstract}

\section{Introduction}

Estimates of conditional average treatment effects (CATEs) allow for
treatment decisions to be tailored to the individual. Formally, let
$A\in\{0,1\}$ be a binary treatment, let $X\in\mathcal{X}$ be a
vector of confounders and treatment effect modifiers, let $Y_{(a)}$
be the potential outcome under treatment $a$, and let $Y=AY_{(1)}+(1-A)Y_{(0)}$
be the observed outcome. The CATE is defined as $\tau(X):=\mathbb{E}\left(Y_{(1)}-Y_{(0)}|X\right)$.
Under conventional assumptions of exchangeability and positivity,\footnote{That is, $Y_{(1)},Y_{(0)}\perp A|X$ and $\text{Pr}(A=1|X)\in(c,1-c)$
for some $c\in(0,1)$.} the CATE can be identified as $\tau(x)=\mathbb{E}\left(Y|X,A=1\right)-\mathbb{E}\left(Y|X,A=0\right)$.

CATE estimation has a rich history going back several decades (see,
e.g., \citealp{Robins1995-aw,Hill2011-zd,Zhao2012-qx,Imai2013-bz,Athey2016-vj,Hahn2017-hi}).
We focus here on two general approaches: pseudo-outcome regression
(POR) and R-learning. Both approaches easily accommodate flexible
machine learning tools, and can attain double robustness (DR) properties
similar to those established in the average treatment effect (ATE)
literature (\citealp{Kennedy2022-da,Nie2020-ih}; see also \citealt{Scharfstein1999-qw,Robins2000-xh,Bang2005-kd,Chernozhukov2022-iq,Kennedy2022-jv})

POR aims to derive a noisy but unbiased approximation of $Y_{(1)}-Y_{(0)}$,
and to fit a regression to predict this approximation using $X$ (\citealp{Rubin2005-pe,Van_der_Laan2006-os,Tian2014-fz,Chen2017-uv,Foster2019-nb,Kunzel2019-ni,Semenova2020-bw,Curth2021-qi};
see also \citealt{Buckley1979-gg,Fan1994-vz,Rubin2007-ka,Diaz2018-ax}).
The approximation of $Y_{(1)}-Y_{(0)}$ is referred to as a ``unbiasing
transformation'' or ``pseudo-outcome'' because it serves as an
observed stand-in for the latent outcome of interest $Y_{(1)}-Y_{(0)}$.
For example, if the propensity scores $\text{Pr}(A=1|X)$ are known,
then an appropriate pseudo-outcome can be derived using inverse propensity
weights: $f_{\text{IPW}}(A,Y):=AY/\text{Pr}(A=1)-(1-A)Y/\text{Pr}(A=0)$.
Since $\mathbb{E}\left(f_{\text{IPW}}(A,Y)|X\right)=\tau(X)$, regressing
the pseudo-outcomes $f_{\text{IPW}}(A,Y)$ against $X$ produces a
sensible estimate of $\tau$ \citep{Powers2018-ya}. This regression
can be done with any off-the-shelf machine learning algorithm. For
this reason, POR methods are sometimes referred to as \textquotedblleft meta-algorithms\textquotedblright{}
(\citealp{Kennedy2022-da}).

R-learning estimates the CATE using a moment condition derived by
\citeauthor{Robinson1988-vy} (\citeyear{Robinson1988-vy}; see Section
5.2 of \citealp{Robins2008-lt}; \citealp{Semenova2017-ro,Nie2020-ih,Zhao2022-fn};
\citealp{Kennedy2022-da,Kennedy2022-cn}). While R-Learning is sometimes
described as separate from POR, it can also be expressed as a \emph{weighted
}POR (see Section \ref{sec:Motivation-intuition}, below, and the
\uline{\href{https://econml.azurewebsites.net/spec/estimation/dml.html}{NonParamDML}}
method in the EconML package from \citealt{Syrgkanis2021-pd}). 

This parallel between R-learning and weighted POR invites the question
of whether or not weights should be used in POR more broadly, and,
if so, what choice of weights is optimal? In other words, even after
confounding bias has been accounted for through a pseudo-outcome transformation
(e.g., $f_{\text{IPW}}$), should \emph{additional} weights be used
to prioritize the fit of $\tau$ of different subregions of $\mathcal{X}$?
We aim to shed light on this question through a combination of simulation
\& theory.

\subsection*{Contribution Summary}

The main intuition of this manuscript is that pseudo-outcomes based
on inverse-propensity weights are effective at removing confounding,
but can be unstable in the face of propensity scores close to zero
or one. Inverse-\emph{variance} weights restabilize the POR without
reintroducing confounding, since the CATE estimand is conditional
on $X$, and $Y$ is unconfounded within strata of $X$. This form
of reweighting is done implicitly by the R-Learner. 

Section \ref{sec:Motivation-intuition} discusses the above intuition
in more detail. Section \ref{sec:Simulations} shows that the intuition
bears out in simulations. Section \ref{sec:theory} demonstrates how
the framework of weighted POR can be used to study bias terms for
CATE estimates, and to derive fast convergence rates. We close with
a discussion.

\subsection{Stabilizing weights in CATE estimation\label{sec:Motivation-intuition}}

In this section we outline connections between R-Learning and inverse-variance
weighting (IVW). Let $Z:=(Y,X,A)$, and let
\begin{align*}
\mu_{a}(X) & =\mathbb{E}\left(Y|X,A=a\right),\\
\eta(X) & =\mathbb{E}\left(Y|X\right),\\
\pi(X) & =\text{Pr}\left(A=1|X\right),\\
\kappa(X) & =\text{Pr}\left(A=0|X\right),\text{ and}\\
\nu(X) & =Var(A|X).
\end{align*}
Let $\theta=\{\mu_{1},\mu_{0},\eta,\pi,\kappa,\nu\}$ denote the full
vector of nuisance functions, and let $\hat{\theta}=\{\hat{\mu}_{1},\hat{\mu}_{0},\hat{\eta},\hat{\pi},\hat{\kappa},\hat{\nu}\}$
be a set of corresponding nuisance estimates. We use $\mu$ and $\hat{\mu}$
as shorthand for $\{\mu_{0},\mu_{1}\}$ and $\{\hat{\mu}_{0},\hat{\mu}_{1}\}$
respectively. One of the reasons we include the redundant representations
$\pi(x)$ and $\kappa(x)=1-\pi(x)$ is to simplify certain formulas
and bias results later on. The notation ``kappa'' is meant to be
reminiscent of the term ``\textbf{\emph{c}}ontrol.'' 

\subsubsection{Weights used in R-Learning}

Given a pair of pre-estimated nuisance functions $\hat{\eta}$ and
$\hat{\pi}$, the R-Learning estimate of the CATE ($\tau$) is typically
written as
\begin{align}
\arg\min_{\hat{\tau}} & \sum_{i=1}^{n}\left[\left\{ Y_{i}-\hat{\eta}(X_{i})\right\} -\left\{ A_{i}-\hat{\pi}(X_{i})\right\} \hat{\tau}(X_{i})\right]^{2}.\label{eq:RL-def-min}
\end{align}
The procedure is motivated by the fact that the term in square brackets
has mean zero when $\hat{\eta}=\eta$, $\hat{\pi}=\pi$ and $\hat{\tau}=\tau$
\citep{Robinson1988-vy}. The nuisance estimates $\hat{\eta}$ and
$\hat{\pi}$, are typically obtained via \emph{cross-fitting }(CF):
splitting the sample into two partitions, using one to estimate $\hat{\eta}$
and $\hat{\pi}$, and using the other to create the summands in Eq
(\ref{eq:RL-def-min}) (\citealp{Nie2020-ih,Kennedy2020-ie,Kennedy2022-jv,Chernozhukov2022-jx,Chernozhukov2022-iq};
see also related work from, e.g., \citealt{Bickel1982-ht,Schick1986-qk,Bickel1988-mt},
as well as \citealt{Athey2016-vj}). In general, we assume in this
section that $\hat{\theta}$ is pre-estimated from an independent
dataset or sample partition.

A known but often overlooked fact is that the minimization in Eq (\ref{eq:RL-def-min})
can equivalently be solved by fitting a \emph{weighted regression}
using $X$ to predict 
\begin{align}
f_{\text{U},\hat{\theta}}(Z):= & \frac{Y-\hat{\eta}(X)}{A-\hat{\pi}(X)}\label{eq:U-pseudo-outcome}
\end{align}
with weights $\left\{ A-\hat{\pi}(X)\right\} ^{2}$ and the squared
error loss function. While this connection is known in the literature
as a computational trick for implementing R-Learning (see, e.g., Eq
(8) of \citealp{Zhao2022-fn}; and the \uline{\href{https://econml.azurewebsites.net/spec/estimation/dml.html}{NonParamDML}}
method in the EconML package, \citealt{Syrgkanis2021-pd}), there
appears to be little discussion of how the regression framing can
serve to motivate R-Learning in the first place. 

One such motivation comes from ``U-Learning,'' a method that fits
an \emph{unweighted} regression to predict $f_{\text{U},\hat{\theta}}(Z)$
from $X$ (see the Appendix of \citealp{Kunzel2019-ni}). The rationale
for U-Learning is that, if $\hat{\pi}=\pi$ and $\hat{\eta}=\eta$,
then $f_{U,\hat{\theta}}$ is a pseudo-outcome in the sense that $\mathbb{E}\left[f_{\text{U},\hat{\theta}}(Z)|X\right]=\tau(X)$
(\citealp{Robinson1988-vy,Kunzel2019-ni,Nie2020-ih}). \footnote{This follows from the ``Robinson Decomposition.''}
This rationale immediately applies to R-Learning as well.

Moreover, we can motivate the R-Learner's weights by appealing to
the intuition of inverse-variance weighted least squares. We show
in Appendix \ref{sec:Variance-of-POs} that, if $\hat{\theta}=\theta$,
the treatment effect is null (i.e., $A\perp Y|X$), and the outcome
$Y$ is homoskedastic (i.e., $Var(Y|X)=\sigma^{2}$ is constant),
then the pseudo-outcome $f_{U,\hat{\theta}}$ used in R-Learning has
conditional variance
\begin{align}
Var\left(\frac{Y-\eta(X)}{A-\pi(X)}|X\right) & \propto\mathbb{E}\left[\left(A-\pi(X)\right)^{-2}|X\right].\label{eq:varx}
\end{align}
In this way, the weights $\left\{ A-\hat{\pi}(X)\right\} ^{2}$ used
by R-Learning are approximate IVWs, and we would expect them to stabilize
the regression. 

Indeed, \citet{Nie2020-ih} remark that U-Learning suffers from instability
due to the denominator in $f_{\text{U},\hat{\theta}}(Z)$. They find
that R-Learning generally outperforms the U-Learner in simulations.
Since the R-Learner is equivalent to a weighted U-Learner, this finding
effectively means that the $\left\{ A-\hat{\pi}(X)\right\} ^{2}$
weights used in R-Learning counteract the instabilities of U-Learning.
To our knowledge, the implicit connections between R-Learning, U-Learning
and IVW have not been discussed in the literature.

Figure \ref{fig:illustration} shows a simple simulated illustration
of how the R-Learner's weights provide stabilization. Here, $X\sim U(0.05,0.95)$,
$\pi(X)=X$, and $Y\sim N(0,1)$ regardless of the value of $(A,X)$.
This implies that $\tau(x)=0$ for all $x$, and that the propensity
score is most extreme when $x$ is close to 0 or 1. For simplicity
of illustration, we briefly assume perfect knowledge of the nuisance
functions, and use this knowledge to define pseudo-outcomes according
to Eq (\ref{eq:U-pseudo-outcome}). (We remove this assumption in
our theoretical analysis and main simulation study.) Given these pseudo-outcomes,
we apply both U-Learning and R-Learning using spline-based, (weighted)
POR. Figure \ref{fig:illustration} shows the results. Here, we can
see that values of $x$ close to 0 or 1 produce extreme propensity
scores, which lead to instability in the pseudo-outcomes. While this
hinders the U-Learner's performance, the R-Learner is able to provide
a more stable result and a lower rMSE by down-weighting observations
with extreme propensity scores.

\begin{figure*}

\begin{centering}
\includegraphics[width=0.74\paperwidth]{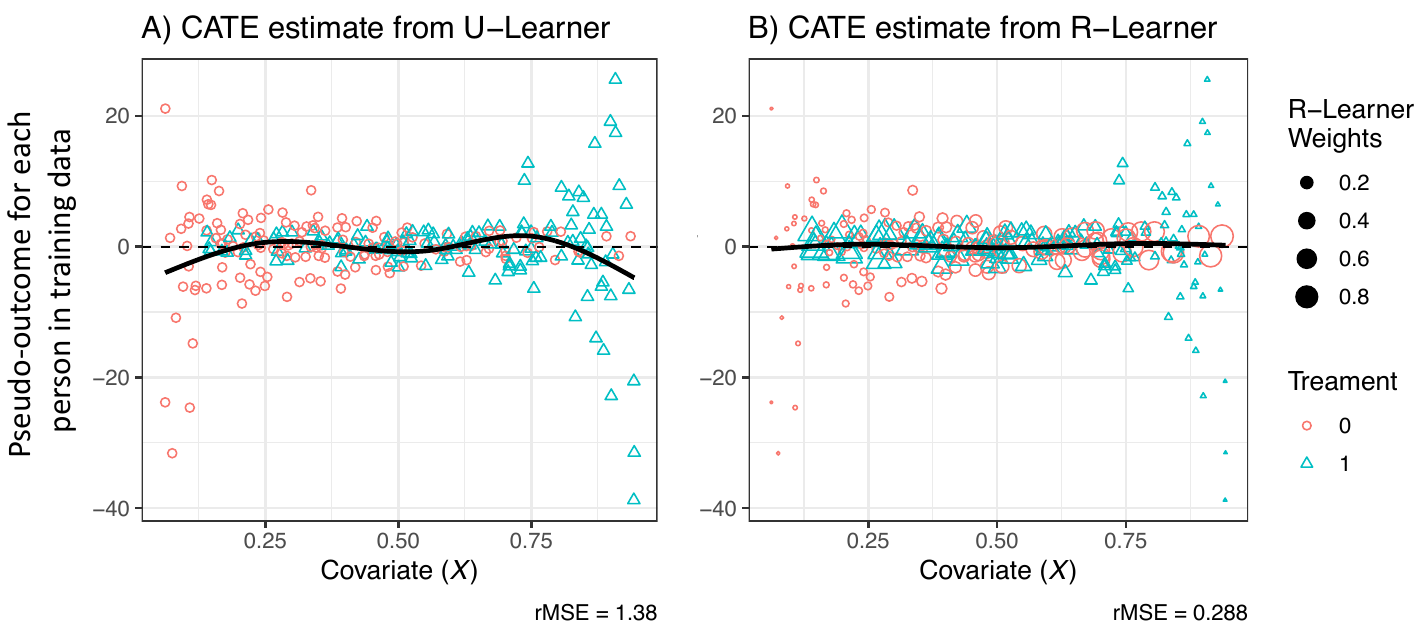}
\par\end{centering}
\caption{\label{fig:illustration}Example of how weights stabilize pseudo-outcome
regression, using a single simulated sample. Here, the true conditional
average treatment effect is zero for all patients. The estimates from
U-Learning \& R-Learning are shown as black lines. By down-weighting
the observations with high variance, i.e., those with extreme propensity
scores, R-Learning is able to achieve a lower rMSE.}

\end{figure*}

\subsubsection{Alternative motivation for R-Learner's weights}

As an alternative to Eq (\ref{eq:varx}), a similar motivation for
the R-Learner's weights can be derived by noting that $\left\{ A-\hat{\pi}(X)\right\} ^{2}$
is roughly proportional the inverse variance of $f_{\text{U,\ensuremath{\hat{\theta}}}}(Z)$
conditional on \emph{conditional on $\hat{\theta}$, $X$} \emph{and}
$A$. More specifically, if $Var\left(Y|A,X\right)=\sigma^{2}$ is
constant, then
\[
Var\left(\frac{Y-\hat{\eta}(X)}{A-\hat{\pi}(X)}|A,X,\hat{\theta}\right)\propto\left\{ A-\hat{\pi}(X)\right\} ^{-2}.
\]
Thus, if we were to expand R-Learning to predict $f_{\text{U,\ensuremath{\hat{\theta}}}}$
as a function of \emph{both $X$ and} $A$, and if $Var(Y|A,X)$ were
constant, then $\left\{ A-\hat{\pi}(X)\right\} ^{2}$ would form appropriate
inverse variance weights, producing the regression problem
\begin{equation}
\arg\min_{\hat{g}}\sum_{i=1}^{n}\left\{ A_{i}-\hat{\pi}(X_{i})\right\} ^{2}\left\{ \frac{Y-\hat{\eta}(X)}{A-\hat{\pi}(X)}-\hat{g}(A_{i},X_{i})\right\} ^{2}.\label{eq:empirical-opt}
\end{equation}
The change to include $A$ as a covariate is balanced by the fact
that, if $\hat{\theta}=\theta$, then the population minimizer for
Eq (\ref{eq:empirical-opt}), $\mathbb{E}\left[\frac{Y-\eta(X)}{A-\pi(X)}|A,X\right]$,
does not actually depend on $A$. More specifically, the Robinson
Decomposition implies that $\mathbb{E}\left[\frac{Y-\eta(X)}{A-\pi(X)}|A,X\right]=\tau(X)$.
Reflecting this fact, if we additionally require the solution to Eq
(\ref{eq:empirical-opt}) to not depend on $A$, then we recover R-Learning
exactly. 

\subsubsection{Weights in \textquotedblleft oracle\textquotedblright{} R-Learning}

A similar connection to stabilizing weights can be seen in the ``oracle''
version of R-Learning studied by \citeauthor{Kennedy2022-da} (\citeyear{Kennedy2022-da};
see their Section 7.6.1). This hypothetical oracle model fits a weighted
POR to predict the latent function 
\begin{align*}
f_{\text{OR},\theta}(Z) & :=\frac{\left\{ A-\pi(X)\right\} \left\{ Y-\eta(X)\right\} }{\pi(X)\left\{ 1-\pi(X)\right\} }\\
 & \approx\frac{\left\{ A-\hat{\pi}(X)\right\} \left\{ Y-\hat{\eta}(X)\right\} }{\left\{ A-\hat{\pi}(X)\right\} ^{2}}\\
 & =f_{\text{U},\hat{\theta}}(A,X,Y),
\end{align*}
with weights $\nu(X)=Var(A|X)$. Above, the approximation simply reflects
the fact that if $\hat{\pi}=\pi$ then the conditional expectation
of the denominators are identical. Again, if the treatment effect
is null ($A\perp Y|X$) and the conditional variance of $Y$ is constant
(i.e., $Var(Y|X)=\sigma^{2}$), then 
\[
Var(f_{\text{OR},\theta}(A,X,Y)|X)\propto\nu(X)^{-1}
\]
(see Appendix \ref{sec:Variance-of-POs}). Thus, in the null setting,
the oracle R-Learner is an inverse-variance weighted POR.

\subsubsection{Weights for the DR-Learner}

Another pseudo-outcome transformations that can suffer from instability
is the ``DR-Learner'' (\citealp{Kennedy2022-da}). This method fits
a regression using $X$ to predict $f_{\text{DR},\hat{\theta}}(Z)=f_{1,\hat{\theta}}(Z)-f_{0,\hat{\theta}}(Z)$,
where
\begin{align}
 & f_{a,\hat{\theta}}(Z)\nonumber \\
 & \hspace{1em}=\hat{\mu}_{a}(X)+\frac{1(A=a)}{a\hat{\pi}(X)+(1-a)\hat{\kappa}(X)}\left(Y-\hat{\mu}_{a}(X)\right).\label{eq:f-theta-a}
\end{align}
If $Var(Y|X,A)=\sigma^{2}$ is constant, then it is fairly straightforward
to show that $Var\left(f_{\text{DR},\hat{\theta}}(Z)|X,\hat{\theta}=\theta\right)=\kappa(X)^{-1}\pi(X)^{-1}\sigma^{2}$
(Appendix \ref{sec:Variance-of-POs}). Again, extreme values of the
propensity score lead to regions where the pseudo-outcome has a high
variance. Inspired by this fact, we will see in the sections below
that using weights $\hat{\kappa}(X)\hat{\pi}(X)$ when fitting a POR
to predict $f_{\text{DR},\hat{\theta}}(Z)$ leads to fast convergence
rates and better simulated errors.

Table \ref{tab:Different-available-pseudo-outco} summarizes the above
relationships.

\begin{table*}
\caption{\label{tab:Different-available-pseudo-outco}Different available pseudo-outcome
transformations and their conditional variances given $X$, under
certain simplifying assumptions (see Appendix \ref{sec:Variance-of-POs}).}

\vskip 0.15in%
\begin{tabular}{|c|c|c|}
\hline 
Label & Outcome Transformation & Conditional Variance\tabularnewline
\hline 
\hline 
U, R ($f_{\text{U},\theta})$ & $\frac{Y-\eta(X)}{A-\pi(X)}$ & $\propto\frac{\pi^{3}+\left\{ 1-\pi\right\} ^{3}}{\left(1-\pi\right)^{2}\pi^{2}}=\mathbb{E}\left[\left(A-\pi(X)\right)^{-2}|X\right]$\tabularnewline
\hline 
DR ($f_{\text{DR},\theta})$ & $\mu_{1}(X)-\mu_{0}(X)\hspace{1em}\hspace{1em}\hspace{1em}\hspace{1em}\hspace{1em}\hspace{1em}$ & $\propto1/\nu(X)$\tabularnewline
 & $\hspace{1em}\hspace{1em}+\frac{A-\pi(X)}{\pi(X)(1-\pi(X))}\left(Y-\mu_{A}(X)\right)$ & \tabularnewline
\hline 
Oracle-R $(f_{\text{OR},\theta})$ & $\frac{\left\{ A-\pi(X)\right\} \left\{ Y-\eta(X)\right\} }{\pi(X)(1-\pi(X))}$ & $\propto1/\nu(X)$\tabularnewline
\hline 
\end{tabular}
\centering{}
\end{table*}

\section{Simulations\label{sec:Simulations}}

The goal of this simulation section is to examine the role of weights
in POR. We include a total of 6 simulation scenarios, labeled A, B,
C, D, E \& F. The first four are experiments taken from \citet{Nie2020-ih},
with $|X|$ set equal to 10. Setting E is the ``low dimensional''
simulated example from \citet{Kennedy2022-da}. Setting F is the simple
illustrative example from Figure \ref{fig:illustration}. Table \ref{tab:sim-def}
presents each setting in detail, and Table \ref{tab:sim-desc} gives
a qualitative summary of each setting. The settings generally differ
in their complexity for the functions $\eta$, $\tau$ and $\pi$.

\begin{table*}
\caption{\label{tab:sim-def} Simulation Setting Details. Below we show the
covariate distribution, CATE function, and nuisance functions for
simulations A through F. The notation $\text{trim}_{a}(b)$ is shorthand
for $\min(\max(a,b),1-a)$, and the notation $(a)_{+}$ is shorthand
for $\max(a,0)$. Settings A-D use multivariate, $iid$ covariates
$X$ with a dimension of 10. Here, each element of $X$ follows the
distribution shown in the second column. Simulations E \& F use univariate
$X$. A qualitative description of these simulation settings is shown
in Table \ref{tab:sim-desc}.}

\centering{}\vskip 0.15in%
\begin{tabular}{|l|l|l|l|l|}
\hline 
Label & $X$ distr. & $\tau\left(x\right)$ & $\mathbb{E}\left[Y|X=x\right]$ & $\mathbb{E}\left[A|X=x\right]$\tabularnewline
\hline 
\hline 
A & $U(0,1)$ & $\frac{1}{2}x_{1}+\frac{1}{2}x_{2}$ & $\sin(\pi x_{1}x_{2})+2\left(x_{3}-\frac{1}{2}\right)^{2}$ & $\text{trim}_{0.1}\left\{ \sin(\pi x_{1}x_{2})\right\} $\tabularnewline
\hline 
B & $N(0,1)$ & $\begin{array}{c}
\log(1+e^{x_{2}})\\
+x_{1}
\end{array}$ & $\begin{array}{c}
\max\{0,x_{1}+x_{2},x_{3}\}\\
\hspace{1em}+\left(x_{4}+x_{5}\right)_{+}
\end{array}$ & $1/2$\tabularnewline
\hline 
C & $N(0,1)$ & 1 & $2\log\left(1+e^{x_{1}+x_{2}+x_{3}}\right)$ & $\frac{1}{1+e^{x_{2}+x_{3}}}$\tabularnewline
\hline 
D & $N(0,1)$ & $\begin{array}{l}
\left(\sum_{i=1}^{3}x_{i}\right)_{+}\\
-\left(x_{4}+x_{5}\right)_{+}
\end{array}$ & $\begin{array}{l}
\left(\sum_{i=1}^{3}x_{i}\right)_{+}\\
\hspace{1em}+\frac{1}{2}\left(x_{4}+x_{5}\right)_{+}
\end{array}$ & $\frac{1}{1+e^{-x_{1}}+e^{-x_{2}}}$\tabularnewline
\hline 
E & $U(-1,1)$ & 0 & $\begin{array}{l}
1(x_{1}\leq-.5)\frac{(x_{1}+2)^{2}}{2}\\
+1(x_{1}>.5)(x_{1}+0.125)\\
+\left(\frac{x_{1}}{2}+0.875\right)1\left(-\frac{1}{2}<x_{1}<0\right)\\
+\left\{ 1\left(0<x_{1}<\frac{1}{2}\right)\right.\\
\left.\hspace{1em}\times\left(-5\left(x_{1}-\frac{1}{5}\right)^{2}+1.075)\right)\right\} 
\end{array}$ & $0.1+(0.8x_{1})_{+}$\tabularnewline
\hline 
F & $U\left(\frac{1}{20},\frac{19}{20}\right)$ & 0 & 1 & $x_{1}$\tabularnewline
\hline 
\end{tabular}
\end{table*}

\begin{table*}
\centering{}\caption{\label{tab:sim-desc} Qualitative summary of the simulation settings
detailed in Table \ref{tab:sim-def}.}
\vskip 0.15in%
\begin{tabular}{|l|l|l|l|l|}
\hline 
Label & Description & $\tau\left(x\right)$ & $\mathbb{E}\left[Y|X=x\right]$ & $\mathbb{E}\left[A|X=x\right]$\tabularnewline
\hline 
\hline 
A & Simple effect & \texttt{\textcolor{green}{}}\texttt{\textcolor{green}{{\color{teal}Simple}}} & \texttt{\textcolor{green}{}}\texttt{\textcolor{green}{{\color{BrickRed}Complex}}} & \texttt{\textcolor{green}{}}\texttt{\textcolor{green}{{\color{BrickRed}Complex}}}\tabularnewline
\hline 
B & Randomized trial & \texttt{\textcolor{green}{}}\texttt{\textcolor{green}{{\color{BurntOrange}Moderate}}} & \texttt{\textcolor{green}{}}\texttt{\textcolor{green}{{\color{BurntOrange}Moderate}}} & \texttt{\textcolor{green}{}}\texttt{\textcolor{green}{{\color{ForestGreen}Constant}}}\tabularnewline
\hline 
C & Complex prognosis & \texttt{\textcolor{green}{}}\texttt{\textcolor{green}{{\color{ForestGreen}Constant}}} & \texttt{\textcolor{green}{}}\texttt{\textcolor{green}{{\color{BrickRed}Complex}}} & \texttt{\textcolor{green}{}}\texttt{\textcolor{green}{{\color{teal}Simple}}}\tabularnewline
\hline 
D & Unrelated arms & \texttt{\textcolor{green}{}}\texttt{\textcolor{green}{{\color{BurntOrange}Moderate}}} & \texttt{\textcolor{green}{}}\texttt{\textcolor{green}{{\color{BurntOrange}Moderate}}} & \texttt{\textcolor{green}{}}\texttt{\textcolor{green}{{\color{BurntOrange}Moderate}}}\tabularnewline
\hline 
E & Non-differentiable prognosis & \texttt{\textcolor{green}{}}\texttt{\textcolor{green}{{\color{ForestGreen}Constant}}} & \texttt{\textcolor{green}{}}\texttt{\textcolor{green}{{\color{BrickRed}Complex}}} & \texttt{\textcolor{green}{}}\texttt{\textcolor{green}{{\color{teal}Simple}}}\tabularnewline
\hline 
F & Simple illustration & \texttt{\textcolor{green}{}}\texttt{\textcolor{green}{{\color{ForestGreen}Constant}}} & \texttt{\textcolor{green}{}}\texttt{\textcolor{green}{{\color{ForestGreen}Constant}}} & \texttt{\textcolor{green}{}}\texttt{\textcolor{green}{{\color{teal}Simple}}}\tabularnewline
\hline 
\end{tabular}
\end{table*}

We implemented POR with two pseudo-outcome functions, $f_{U,\hat{\theta}}$
and $f_{\text{DR},\hat{\theta}}$. In each case we used 10-fold cross-fitting.
For example, for $f_{U,\hat{\theta}}$, we used 90\% of the data to
estimate the nuisance functions $\hat{\theta}$, evaluated and stored
$f_{U,\hat{\theta}}(Z_{i})$ for the remaining 10\%, and then repeated
this process 10 times with different fold assignments to obtain a
pseudo-outcome for every individual. We then fit a regression against
all of these pseudo-outcomes together. We used boosted trees to perform
all of our nuisance regressions, as well as the final regression predicting
pseudo-outcomes as a function of $X$.\footnote{Specifically, we used the lightgbm R package written by \citet{Shi2023-bh}.}

For each pseudo-outcome function, we considered a weighted and unweighted
version. For $f_{U,\hat{\theta}}$ we compare uniform weights (i.e.,
the U-Learner) against weights $\left\{ A-\hat{\pi}(X)\right\} ^{2}$
(i.e., the R-Learner). For $f_{\text{DR},\hat{\theta}}$ we compare
uniform against weights $\hat{\pi}(X)\hat{\kappa}(X)$ (see Table
\ref{tab:Different-available-pseudo-outco}). 

As a baseline comparator, we consider a ``T-Learner'' approach \citep{Kunzel2019-ni},
which entails separately fitting two estimates $\hat{\mu}_{1}$ and
$\hat{\mu}_{0}$ for $\mu_{1}$ and $\mu_{0}$ respectively and then
taking $\hat{\mu}_{1}(x_{\text{new}})-\hat{\mu}_{0}(x_{\text{new}})$
as an estimate of $\tau(x_{\text{new}})$. We used the same boosted
tree algorithm when fitting the T-Learner.

Figure \ref{fig:simulation-result} shows the results of 400 simulation
iterations. Weighted POR matched or outperformed unweighted POR in
every setting. Performance was similar across the two weighted POR
methods we considered. The T-Learner performed comparably to weighted
POR in Settings D, E \& F, but dramatically underperformed in Settings
A, B \& C.

\begin{figure}
\begin{centering}
\includegraphics[width=1\columnwidth]{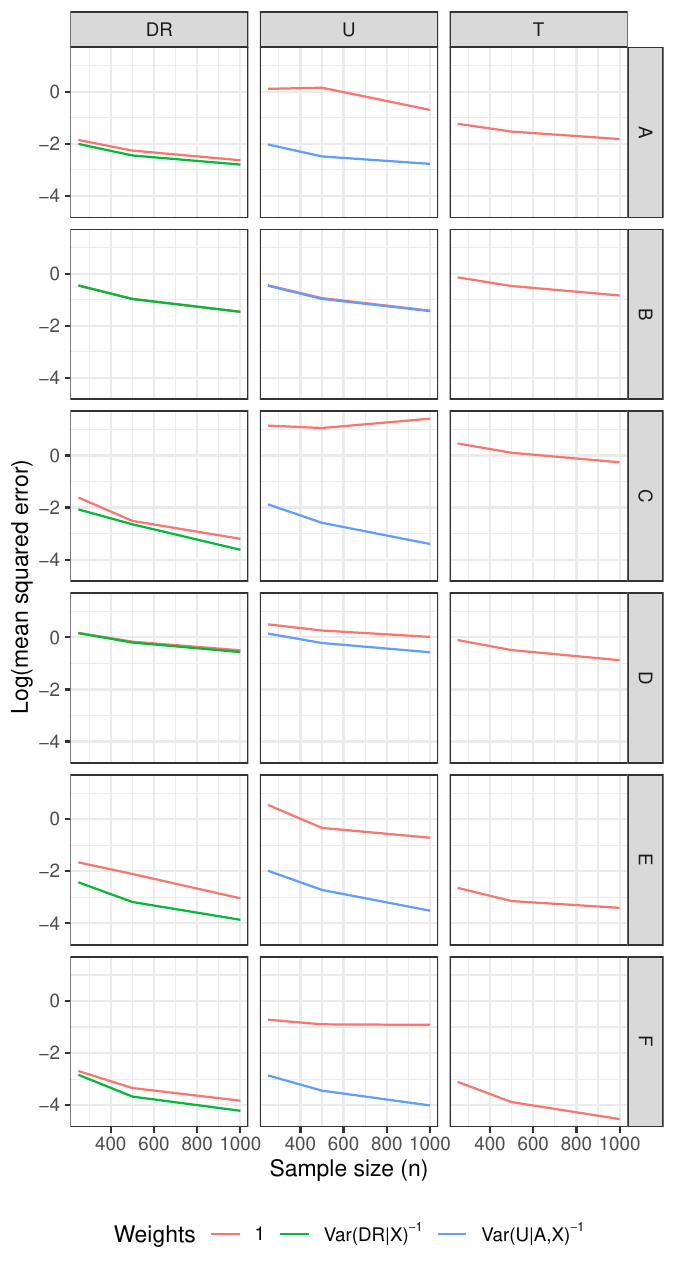}
\par\end{centering}
\caption{\label{fig:simulation-result} Weighted vs unweighted estimation of
simulated CATEs. The columns respectively represent POR with the DR-Learner
pseudo-outcome ($f_{\text{DR},\hat{\theta}}$), POR with the U-Learner
pseudo-outcome ($f_{\text{U},\hat{\theta}}$), and T-Learning. The
rows show the different simulation settings. For the weights, $\text{Var(U|A,X)}^{-1}$
is an abbreviation for $\left\{ A-\hat{\pi}(X)\right\} ^{2}\propto Var\left(f_{\text{U},\hat{\theta}}(Z)|A,X,\hat{\theta}\right)^{-1}$,
and $\text{Var(DR|X)}^{-1}$ is an abbreviation for $\hat{\pi}(X)\hat{\kappa}(X)\approx Var\left(f_{\text{DR},\hat{\theta}}(Z)|X,\hat{\theta}\right)^{-1}$. }
\end{figure}

\section{Convergence Rate Results\label{sec:theory}}

Part of the value the IVW framework is that it provides a straightforward
path for simplifying expressions for the bias of CATE estimates. Specifically,
if $Z,\hat{\kappa},\hat{\pi},$ and $\hat{\mu}$ are mutually independent,
we can make use of the following helpful identity. 
\begin{align}
 & \mathbb{E}\left(\hat{\kappa}\hat{\pi}\left(f_{1,\hat{\theta}}-f_{1,\theta}\right)|X\right)\nonumber \\
 & \hspace{1em}=\mathbb{E}\left(\hat{\kappa}\hat{\pi}A\left(\frac{1}{\hat{\pi}}-\frac{1}{\pi}\right)(\hat{\mu}_{1}-\mu_{1})|X\right)\nonumber \\
 & \hspace{1em}=\mathbb{E}\left(\hat{\kappa}\hat{\pi}\pi\left(\frac{1}{\hat{\pi}}-\frac{1}{\pi}\right)(\hat{\mu}_{1}-\mu_{1})|X\right)\nonumber \\
 & \hspace{1em}=\mathbb{E}\left(\hat{\kappa}|X\right)\mathbb{E}\left(\pi-\hat{\pi}|X\right)\mathbb{E}\left(\hat{\mu}_{1}-\mu_{1}|X\right).\label{eq:bias-identity}
\end{align}
The left-hand side is the weighted conditional bias in estimating
$f_{1,\hat{\theta}}$, which we can see depends only on the \emph{product}
of the biases for $\hat{\pi}$ and $\hat{\mu}$. The first equality
is shown in Appendix \ref{sec:Proof-of-Theorem}. The second equality
iterates expectations over $\{\hat{\mu}_{1},\hat{\kappa},\hat{\pi}\}$
to replace $A$ with $\pi$. The last comes from the independence
assumption. \citet{Kennedy2022-da} employs a similar identity when
reducing bias terms associated with the oracle R-Learner (see their
Section 7.6). In the remainder of this section, Eq (\ref{eq:bias-identity})
will play a fundamental role in our study of convergence rates.

\subsection{Notation\label{subsec:Notation}}

Let $\bar{\mathbf{Z}}=(\bar{\mathbf{X}},\bar{\mathbf{a}},\bar{\mathbf{y}})$
denote a dataset of $n$ observations used for POR, which we assume
is independent of the data used for estimating the nuisance functions
$\hat{\theta}$. Let $d$ denote the dimension of the domain $\mathcal{X}$
of $X$, and let $x_{\text{new}}$ be a point for which we would like
to predict $\tau(x_{\text{new}})$. 

We will often use the ``bar'' notation when referring to estimators
derived from $\bar{\mathbf{Z}}$; ``hat'' notation when referring
to quantities that depend on nuisance training data; and both notations
when referring to estimators derived from both datasets. We do this
to help keep track of dependencies between estimated quantities. Let
$\mathbf{X}_{\text{all}}$ be the combined matrix of covariates including
$\bar{\mathbf{X}}$ as well as the covariates used in training nuisance
functions.

Next we introduce notation to describe convergence rates. From random
variables $A_{n},B_{n}$, let $A_{n}\lesssim B_{n}$ denote that there
exists a constant $c$ such that $A_{n}\leq cB_{n}$ for all $n$.
Let $A_{n}\asymp B_{n}$ denote that $A_{n}\lesssim B_{n}$ and $B_{n}\lesssim A_{n}$.\textcolor{black}{{}
Let $A_{n}\lesssim_{\mathbb{P}}c_{n}$ denote that $A_{n}=O_{\mathbb{P}}(c_{n})$
for constants $c_{n}$.}\textcolor{teal}{}

We say that a function $f$ is $s$-smooth if there exists a constant
$c$ such that $|f(x)-f_{s,x'}(x)|\leq c||x-x'||^{s}$ for all $x,x'$,
where $f_{s,x'}$ is the $\lfloor s\rfloor^{th}$ order Taylor approximation
of $f$ at $x'$. This form of smoothness is a key property of functions
in a H\"{o}lder class (see, e.g., \citealp{Tsybakov2009-yb,Kennedy2022-da}).

For any function $g(Z)$, let $\bar{\mathbb{P}}_{n}(g(Z)):=\frac{1}{n}\sum_{i=1}^{n}g(Z_{i})$
denote its sample average over $\bar{\mathbf{Z}}$. We frequently
omit function arguments when clear from context, writing, for example,
$\bar{\mathbb{P}}_{n}(\pi)$ in place of $\bar{\mathbb{P}}_{n}(\pi(X))$.

\subsection{Setup \& Assumptions}

Following \citet{Kennedy2022-da}, we study convergence rates for
an estimator of $\tau$ that uses a local polynomial (LP) regression
for the POR step. To define this LP regression, let $h$ be a bandwidth
parameter that we expect will shrink with $n$, let $\texttt{kern}$
be a bounded, nonnegative kernel function that is zero outside of
the range {[}-1,1{]}, and let $K(X):=\frac{1}{h^{d}}\texttt{kern}\left(\frac{\|X-x_{\text{new}}\|}{h}\right)$.
Let $b$ be a $L$-dimensional, polynomial basis function that is
bounded on $\mathcal{X}$. Given independent estimates $\hat{\pi},$
$\hat{\kappa}$ and $\hat{\mu}$, let $\hat{\nu}(X):=\hat{\pi}(X)\hat{\kappa}(X)$,
and let $f_{\text{DR},\hat{\theta}}(Z)=f_{1,\hat{\theta}}(Z)-f_{0,\hat{\theta}}(Z)$
be an observed proxy for the transformation $f_{\text{DR},\theta}$,
where
\begin{align*}
 & f_{a,\hat{\theta}}(Z)\\
 & \hspace{1em}=\hat{\mu}_{a}(X)+\frac{1(A=a)}{a\hat{\pi}(X)+(1-a)\hat{\kappa}(X)}\left(Y-\hat{\mu}_{a}(X)\right).
\end{align*}
 Let
\[
\hat{\bar{\tau}}(x_{\text{new}}):=\frac{1}{n}\sum_{i=1}^{n}\hat{\bar{w}}(X_{i})f_{\text{DR},\hat{\theta}}(Z_{i})
\]
be an estimate of $\tau(x_{\text{new}})$, where
\[
\text{\ensuremath{\hat{\bar{w}}(x):=b(x_{\text{new}})^{\top}\hat{\bar{\mathbf{Q}}}^{-1}b(x)K(x)\hat{\nu}(x)}}
\]
and
\[
\hat{\bar{\mathbf{Q}}}:=\frac{1}{n}\sum_{i=1}^{n}b(X_{i})\hat{\nu}(X_{i})K(X_{i})b(X_{i})^{\top}.
\]
Thus, $\hat{\bar{\tau}}(x_{\text{new}})$ is a weighted LP regression
predicting $f_{\text{DR},\hat{\theta}}(Z)$ from $X$, with stabilizing
weights $\hat{\nu}(X)$. Hereafter, with some abuse of notation,
we also use the term ``weights'' to refer to $\hat{\bar{w}}(X)$.

We study $\hat{\bar{\tau}}(x_{\text{new}})$ by comparing it against
an oracle counterpart using the same estimated weights $\hat{\bar{w}}$,
but using the true function $f_{\text{DR},\theta}$. That is, we define
the oracle estimate
\[
\hat{\bar{\tau}}_{\text{oracle}}(x_{\text{new}}):=\frac{1}{n}\sum_{i=1}^{n}\hat{\bar{w}}(X_{i})f_{\text{DR},\theta}(Z_{i}).
\]
Given $\hat{\pi}$ and $\hat{\kappa}$, this oracle estimate is a
weighted LP regression predicting $f_{\text{DR},\theta}(Z)$ from
$X$, evaluated at the point $X=x_{\text{new}}$.

Next, we present several assumptions. We reuse the notation ``$c$''
to refer to generic constants; the same constant need not satisfy
all assumptions.
\begin{assumption}
\emph{\label{assu:regularity} (Regularity) $\mathbb{E}\left(Y^{2}|A,X\right)$
is bounded.}
\end{assumption}

\emph{}
\begin{assumption}
\emph{\label{assu:positivity} (Positivity) There exists a constant
$c\in(0,1)$ such that, for all covariate values $x$, all $a\in\{0,1\}$,
and all sample sizes $n$, we have $c\leq\hat{\kappa}(x),\kappa(x),\hat{\pi}(x),\pi(x)<1-c$.}
\end{assumption}

\begin{assumption}
\emph{\label{assu:bias-var-nuisance}(Nuisance Error) There exists
a complexity parameter $k$ (e.g., the number of parameters a model)
and constants $c$, $s_{\mu}$ and $s_{\pi}$, such that, with probability
approaching 1, the sequences $\mathsf{V}_{k,n}:=ck/n$, $\mathsf{B}_{\pi,k}:=ck^{-s_{\pi}/d}$
and $\mathsf{B}_{\mu,k}:=ck^{-s_{\mu}/d}$ satisfy
\begin{align*}
Var(\hat{\pi}(x)|\mathbf{X}_{\text{all}}) & \leq\mathsf{V}_{k,n},\\
Var(\hat{\kappa}(x)|\mathbf{X}_{\text{all}}) & \leq\mathsf{V}_{k,n},\\
Var(\hat{\mu}_{a}(x)|\mathbf{X}_{\text{all}}) & \leq\mathsf{V}_{k,n},
\end{align*}
and
\begin{align*}
\mathbb{E}(\hat{\pi}(x)-\pi(x)|\mathbf{X}_{\text{all}}) & \leq\mathsf{B}_{\pi,k},\\
\mathbb{E}(\hat{\kappa}(x)-\kappa(x)|\mathbf{X}_{\text{all}}) & \leq\mathsf{B}_{\pi,k},\\
\mathbb{E}(\hat{\mu}_{a}(x)-\mu_{a}(x)|\mathbf{X}_{\text{all}}) & \leq\mathsf{B}_{\mu,k}
\end{align*}
for all $x$ and $a$. Above, we assume that $k$ grows with $n$,
and that $k<n$.}
\end{assumption}

The bias conditions of Assumption \ref{assu:bias-var-nuisance} will
typically require $\mu_{a}$ and $\pi$ to be $s_{\mu}$-smooth and
$s_{\pi}$-smooth respectively. The variance conditions typically
will require the complexity of the nuisance models (i.e., $k$) to
grow at a limited rate. For example, for spline estimators, they generally
require the design matrices to have stable eigenvalues with high probability.
This can be ensured by requiring $k\log(k)/n$ to converge zero (see,
e.g., \citealp{Tropp2015-dx,Belloni2015-tn,Newey2018-da}). 

\begin{assumption}
\emph{\label{assu:limited-h} (Limited bandwidth) $n>1/h^{d}$.}
\end{assumption}

Assumption \ref{assu:limited-h} is fairly minimal, and is made for
simplicity of presentation. Roughly speaking, it says that $n$ needs
to be at least as large as the number of $h$-diameter subregions
required to fully partition the covariate space.

\begin{assumption}
\emph{\label{assu:bounded-lambda-min} (Eigenvalue Stability) There
exists a constant $c>0$ such that $\lambda_{\min}\left(\hat{\bar{\mathbf{Q}}}\right)>c$
with probability approaching 1.}
\end{assumption}

Assumption \ref{assu:bounded-lambda-min} ensures that the weights
$\hat{\bar{w}}$ are bounded in probability. \citet{Kennedy2022-da}
makes a similar assumption in their Theorem 3.

\begin{assumption}
\emph{\label{assu:uniform-density} ($X$ Distribution) The density
of $X$ is approximately uniform in the sense that, for any $h>0$
and $x\in\mathcal{X}$, we have $\text{Pr}\left[\|X-x\|\leq h\right]\lesssim h^{d}$.}
\end{assumption}

\begin{assumption}
\emph{\label{assu:local-independence} (Local Nuisance Estimators)
There exists a constant $c$ such that $Cov(\hat{\pi}(x),\hat{\pi}(x'))=0$,
$Cov(\hat{\kappa}(x),\hat{\kappa}(x'))=0$, and $Cov(\hat{\mu}_{a}(x),\hat{\mu}_{a}(x'))=0$
for all $x,x',a$ satisfying $\|x-x'\|>ck^{-1/d}$.}
\end{assumption}

Assumption \ref{assu:local-independence} says that the nuisance models'
predictions for sufficiently far away points $x,x'$ depend on entirely
different training data. This is true, for example, in $r$-order
spline regression models that divide each dimension into $p$ partitions,
producing a total of $p^{d}$ neighborhoods and $k=p^{d}d^{r}$ parameters.
If the neighborhoods are approximately evenly sized and $\mathcal{X}$
is the unit hypercube, the maximum distance within a neighborhood
is $\left(\sum_{i=1}^{d}1/p^{2}\right)^{1/2}=d^{1/2}/p=d^{1/2+r/d}k^{-1/d},$
where the last equality comes from rearranging $k=p^{d}d^{r}$. Thus,
predictions for points $x,x'$ that are at least $d^{1/2+r/d}k^{-1/d}$
apart will be independent, as they are created from different neighborhoods
of training data.

\subsection{Convergence rate results}

The assumptions in the previous section allow us to characterize the
difference between $\hat{\bar{\tau}}(x_{\text{new}})$ and the oracle
estimate.
\begin{thm}
\label{thm:main-bnd}(Error with respect to oracle) Under Assumptions
\ref{assu:regularity}-\ref{assu:local-independence}, we have the
following results.
\begin{enumerate}
\item \label{enu:(4-way-CF)-If}(4-way CF) If $\hat{\pi},\hat{\kappa},\hat{\mu}$,
and \textbf{$\bar{\mathbf{Z}}$} are mutually independent, then
\[
\hat{\bar{\tau}}(x_{\text{new}})-\hat{\bar{\tau}}_{\text{oracle}}(x_{\text{new}})\lesssim_{\mathbb{P}}\sqrt{\frac{1}{nh^{d}}}+\mathsf{B}_{\mu}\mathsf{B}_{\pi}.\hphantom{+\frac{k^{1/2-s_{\mu}/d}}{n}+\frac{k^{1/2-s_{\pi}/d}}{n}+\frac{k}{n}}
\]
\item \label{enu:(3-way-CF)-If}(3-way CF) If $\hat{\pi},\hat{\mu}$ and
\textbf{$\bar{\mathbf{Z}}$} are mutually independent; $\hat{\kappa}(x)=1-\hat{\pi}(x)$;
and \\
Var$\left[\sup_{x}\left\{ \hat{\pi}(x)-\pi(x)\right\} ^{2}|\mathbf{X}_{\text{all}}\right]\lesssim k_{n}/n$
with probability approaching 1, then
\[
\hat{\bar{\tau}}(x_{\text{new}})-\hat{\bar{\tau}}_{\text{oracle}}(x_{\text{new}})\lesssim_{\mathbb{P}}\sqrt{\frac{1}{nh^{d}}}+\mathsf{B}_{\mu}\left(\mathsf{B}_{\pi}+\mathsf{V}_{k,n}\right).\hphantom{+\frac{k^{1/2/2-s_{\pi}/d}}{n}+\frac{k}{n}}
\]
\item \label{enu:(2-way-CF)-If}(2-way CF) If $\{\hat{\pi},\hat{\mu}\}\perp\bar{\mathbf{Z}}$
and $\hat{\kappa}(x)=1-\hat{\pi}(x)$, then
\begin{align*}
 & \hat{\bar{\tau}}(x_{\text{new}})-\hat{\bar{\tau}}_{\text{oracle}}(x_{\text{new}})\\
 & \hspace{1em}\lesssim_{\mathbb{P}}\sqrt{\frac{1}{nh^{d}}}+\left(\mathsf{B}_{\mu}+\sqrt{\mathsf{V}_{k,n}}\right)\left(\mathsf{B}_{\pi}+\sqrt{\mathsf{V}_{k,n}}\right).
\end{align*}
\end{enumerate}
\end{thm}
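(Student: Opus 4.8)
The plan is to first rewrite the target as a single weighted empirical average of the pseudo-outcome errors,
\[
\hat{\bar{\tau}}(x_{\text{new}}) - \hat{\bar{\tau}}_{\text{oracle}}(x_{\text{new}}) = b(x_{\text{new}})^\top \hat{\bar{\mathbf{Q}}}^{-1}\hat{\bar{\mathbf{R}}}, \qquad \hat{\bar{\mathbf{R}}} := \frac1n\sum_{i=1}^n b(X_i)K(X_i)\hat{\nu}(X_i)\left\{f_{\text{DR},\hat{\theta}}(Z_i)-f_{\text{DR},\theta}(Z_i)\right\}.
\]
By \cref{assu:bounded-lambda-min} the matrix $\hat{\bar{\mathbf{Q}}}^{-1}$ has operator norm $O_{\mathbb{P}}(1)$ and $b(x_{\text{new}})$ is bounded, so it suffices to show $\|\hat{\bar{\mathbf{R}}}\|$ obeys the claimed bound. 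I would then split $\hat{\bar{\mathbf{R}}}$ into three pieces by conditioning first on $(\bar{\mathbf{X}},\hat{\theta})$ and then on $\bar{\mathbf{X}}$ alone: a POR-noise term $\hat{\bar{\mathbf{R}}}-\mathbb{E}(\hat{\bar{\mathbf{R}}}\mid\bar{\mathbf{X}},\hat{\theta})$, a nuisance-fluctuation term $\mathbb{E}(\hat{\bar{\mathbf{R}}}\mid\bar{\mathbf{X}},\hat{\theta})-\mathbb{E}(\hat{\bar{\mathbf{R}}}\mid\bar{\mathbf{X}})$, and a product-bias term $\mathbb{E}(\hat{\bar{\mathbf{R}}}\mid\bar{\mathbf{X}})$.

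For the POR-noise term, conditioning on $(\bar{\mathbf{X}},\hat{\theta})$ fixes every weight and renders the summands independent across $i$ with conditional variance $\mathrm{Var}(f_{\text{DR},\hat{\theta}}(Z_i)\mid X_i,\hat{\theta})$, which is bounded by \cref{assu:regularity,assu:positivity}. Its conditional covariance is therefore $\tfrac1{n^2}\sum_i b(X_i)b(X_i)^\top K(X_i)^2\hat{\nu}(X_i)^2\,O(1)$, whose norm is $\lesssim \tfrac1{n^2}\sum_i K(X_i)^2$. Since $\mathbb{E}\,K(X)^2\lesssim h^{-d}$ under \cref{assu:uniform-density}, a law-of-large-numbers argument gives $\tfrac1{n^2}\sum_i K(X_i)^2\lesssim_{\mathbb{P}}\tfrac1{nh^d}$, and conditional Chebyshev yields a POR-noise contribution of $\sqrt{1/(nh^d)}$. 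This term is identical in all three cases.

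The heart of the argument is the product-bias term, where the weights pay off. Writing $f_{\text{DR},\hat{\theta}}-f_{\text{DR},\theta}=(f_{1,\hat{\theta}}-f_{1,\theta})-(f_{0,\hat{\theta}}-f_{0,\theta})$ and taking the inner conditional expectation over $(A_i,Y_i)$ exactly as in Eq \eqref{eq:bias-identity}, the factor $\hat{\nu}=\hat{\pi}\hat{\kappa}$ carried by the weight cancels the offending $\hat{\pi}^{-1}$ and $\hat{\kappa}^{-1}$ denominators, leaving the clean product
\[
\hat{\nu}(X_i)\,\mathbb{E}\!\left(f_{\text{DR},\hat{\theta}}-f_{\text{DR},\theta}\mid X_i,\hat{\theta}\right) = \hat{\kappa}(X_i)(\hat{\pi}-\pi)(\hat{\mu}_1-\mu_1)(X_i) - \hat{\pi}(X_i)(\hat{\kappa}-\kappa)(\hat{\mu}_0-\mu_0)(X_i)=:G_i,
\]
with no inverse-propensity factors. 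Taking the remaining expectation over $\hat{\theta}$ and bounding $\|\mathbb{E}(\hat{\bar{\mathbf{R}}}\mid\bar{\mathbf{X}})\|\lesssim\sup_x|\mathbb{E}(G\mid X=x)|\cdot\tfrac1n\sum_i K(X_i)$ (with $\tfrac1n\sum_i K\lesssim_{\mathbb{P}}1$), the three cases differ only in how $\mathbb{E}(G\mid X)$ factors: under 4-way CF the three errors are mutually independent, so $\mathbb{E}(G\mid X)$ is a product of marginal biases, $\lesssim\mathsf{B}_\mu\mathsf{B}_\pi$; under 3-way CF the identity $\hat{\kappa}=1-\hat{\pi}$ forces $\mathrm{Cov}(\hat{\kappa},\hat{\pi}-\pi\mid X)=-\mathrm{Var}(\hat{\pi}\mid X)$, injecting an extra $\mathsf{V}_{k,n}$ and giving $\mathsf{B}_\mu(\mathsf{B}_\pi+\mathsf{V}_{k,n})$; and under 2-way CF, $\hat{\pi}$ and $\hat{\mu}$ are no longer independent, so Cauchy--Schwarz replaces each marginal bias by a root-mean-square error $\sqrt{\mathsf{B}^2+\mathsf{V}_{k,n}}$, producing $(\mathsf{B}_\mu+\sqrt{\mathsf{V}_{k,n}})(\mathsf{B}_\pi+\sqrt{\mathsf{V}_{k,n}})$.

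The main obstacle is the nuisance-fluctuation term $\tfrac1n\sum_i b(X_i)K(X_i)(G_i-\mathbb{E}(G_i\mid X_i))$, whose summands are correlated through the shared estimates $\hat{\theta}$. Here I would invoke \cref{assu:local-independence}: $G_i$ and $G_j$ are uncorrelated once $\|X_i-X_j\|>ck^{-1/d}$, while the kernel restricts all contributing $X_i$ to the radius-$h$ ball about $x_{\text{new}}$. Bounding $\mathrm{Var}(G_i)\lesssim(\mathsf{V}_{k,n}+\mathsf{B}_\pi^2)(\mathsf{V}_{k,n}+\mathsf{B}_\mu^2)$ and counting correlated pairs (expected number $\lesssim n/k$ neighbours per point, each weighted by $K\lesssim h^{-d}$) shows the conditional variance of this term is of smaller order than $1/(nh^d)$ together with the squared product-bias already retained; these are the lower-order contributions suppressed into the phantom terms of the statement. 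The 3-way case additionally requires the supplementary bound on $\mathrm{Var}[\sup_x\{\hat{\pi}(x)-\pi(x)\}^2\mid\mathbf{X}_{\text{all}}]$ precisely because the $-\mathrm{Var}(\hat{\pi})$ coupling of $\hat{\kappa}$ and $\hat{\pi}$ must be controlled uniformly over the bandwidth. Collecting the three pieces and keeping the dominating terms in each regime gives the stated rates.
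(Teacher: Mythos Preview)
Your proposal is correct and closely parallels the paper's argument, but the decomposition is organized differently enough to merit a brief comparison.

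The paper splits $f_{1,\hat{\theta}}-f_{1,\theta}$ via an explicit algebraic identity into three pieces: $\bigl(1-A/\pi\bigr)(\hat{\mu}_{1}-\mu_{1})$, $A\bigl(\hat{\pi}^{-1}-\pi^{-1}\bigr)(Y-\mu_{1})$, and $-A\bigl(\hat{\pi}^{-1}-\pi^{-1}\bigr)(\hat{\mu}_{1}-\mu_{1})$. The first two are mean zero given $(\bar{\mathbf{X}},\hat{\theta})$ and play the role of your POR-noise term; the third is the bias term, but it still carries the random $A$, so the paper must additionally split its second moment via the law of total variance into a piece coming from $A$ and a piece coming from $(\hat{\pi},\hat{\mu})$. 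Your martingale decomposition integrates out $(A,Y)$ immediately, so your $G_{i}$ is exactly the conditional mean of the paper's third term and your nuisance-fluctuation term has no residual $A$-randomness. This is a little cleaner: you avoid the paper's separate ``small-variance'' piece (its Line labeled \eqref{eq:small-var-term}).

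Two places where the paper is sharper than your sketch. First, in the 4-way case the paper conditions on $\hat{\kappa}$ as well as $\mathbf{X}_{\text{all}}$ before taking the bias/variance split; because $\hat{\kappa}$ is independent of $(\hat{\pi},\hat{\mu})$ and bounded, this reduces the covariance calculation to a two-factor product $\epsilon_{\hat{\pi}}\epsilon_{\hat{\mu}}$ rather than the three-factor product in your $G_{i}$. Second, and more importantly, for the off-diagonal covariances the paper does \emph{not} use the crude bound $|\mathrm{Cov}(G_{i},G_{j})|\le\sqrt{\mathrm{Var}(G_{i})\mathrm{Var}(G_{j})}$ you implicitly invoke: combined with your stated bound $\mathrm{Var}(G_{i})\lesssim(\mathsf{V}_{k,n}+\mathsf{B}_{\pi}^{2})(\mathsf{V}_{k,n}+\mathsf{B}_{\mu}^{2})$, this leaves an off-diagonal contribution of order $\mathsf{B}_{\pi}^{2}\mathsf{B}_{\mu}^{2}/(kh^{d})$, which is not obviously dominated by $1/(nh^{d})+(\mathsf{B}_{\mu}\mathsf{B}_{\pi})^{2}$ without further assumptions on $kh^{d}$. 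The paper instead exploits the product structure via the identity $\mathrm{Cov}(A_{1}B_{1},A_{2}B_{2})=\mathrm{Cov}(A_{1},A_{2})\mathrm{Cov}(B_{1},B_{2})+\mathbb{E}(A_{1})\mathbb{E}(A_{2})\mathrm{Cov}(B_{1},B_{2})+\mathrm{Cov}(A_{1},A_{2})\mathbb{E}(B_{1})\mathbb{E}(B_{2})$ together with $|\mathrm{Cov}(\epsilon_{\hat{\pi}i},\epsilon_{\hat{\pi}j})|\lesssim(k/n)\mathbf{1}(\|X_{i}-X_{j}\|\le ck^{-1/d})$, which yields the clean $k/n$ scaling needed to land on $1/(nh^{d})$. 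Your sketch can be repaired the same way (or by noting that the true $\mathrm{Var}(G_{i})$, as opposed to $\mathbb{E}G_{i}^{2}$, is already $\lesssim\mathsf{V}_{k,n}$ under 4-way independence), but as written this step is not quite closed. Finally, the $\hphantom{\cdot}$ material in the theorem statement is LaTeX alignment padding, not suppressed lower-order terms.
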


The three bounds given by Theorem \ref{thm:main-bnd} become less
powerful as we relax the independence assumptions. As in \citet{Newey2018-da}
and \citet{Kennedy2022-da}, the independence conditions can be ensured
via higher-order cross-fitting, or ``nested'' cross-fitting, in
which separate folds are used to estimate each nuisance function.
Higher order cross-fitting is typically impractical in small or moderate
sample sizes, as it requires that a smaller fraction of data points
be used to train each nuisance function. That said, the effect of
dividing our sample into smaller partitions will be asymptotically
dwarfed by the effect of a faster convergence rate.

Point \ref{enu:(2-way-CF)-If} makes the weakest assumptions and produces
the least powerful bound. It is similar to the bound in Lemma 2 of
\citealp{Nie2020-ih}. That is, Point \ref{enu:(2-way-CF)-If} implies
that $\hat{\bar{\tau}}(x_{\text{new}})-\hat{\bar{\tau}}_{\text{oracle}}(x_{\text{new}})\lesssim_{\mathbb{P}}1/\sqrt{nh^{d}}$
if the conditional rMSE of $\hat{\pi}(x)$ and $\hat{\mu}_{a}(x)$
are $\lesssim n^{-1/4}$. The $\sqrt{1/nh^{d}}$ term common to all
three bounds is a standard variance term associated with LP regression
(see, e.g., Proposition 1.13 of \citealp{Tsybakov2009-yb}, or Theorem
3 of \citealp{Kennedy2022-da}). The variance condition in Point \ref{enu:(3-way-CF)-If}
is similar to Assumption \ref{assu:bias-var-nuisance}, and we expect
it to hold in similar situations.

To bound the error of the oracle itself, we additionally assume the
following.
\begin{assumption}
\emph{\label{assu:(Oracle-smoothness)}The target function $\tau$
is $s_{\tau}$-smooth, and the basis $b$ is of order at least $\lfloor s_{\tau}\rfloor$.}
\end{assumption}

From here, fairly standard results for local polynomial regression
(e.g., \citealp{Tsybakov2009-yb}; see also \citealp{Kennedy2022-da})
imply the following result.
\begin{thm}
(Oracle error) \label{thm:oracle-bnd}Under Assumptions \ref{assu:regularity}-\ref{assu:local-independence}
and Assumption \ref{assu:(Oracle-smoothness)},
\[
\hat{\bar{\tau}}_{\text{oracle}}(x_{\text{new}})-\tau(x_{\text{new}})\lesssim_{\mathbb{P}}\sqrt{\frac{1}{nh^{d}}}+h^{s_{\tau}}.
\]
\end{thm}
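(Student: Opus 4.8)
The plan is to treat $\hat{\bar{\tau}}_{\text{oracle}}(x_{\text{new}})$ as an ordinary weighted local polynomial regression and carry out the classical bias--variance decomposition, conditioning throughout on the design $\bar{\mathbf{X}}$ and the realized nuisance weight function $\hat{\nu}$ (so that every $\hat{\bar{w}}(X_i)$ becomes a fixed number). The crucial algebraic fact, which I would establish first, is the reproducing identity $\frac{1}{n}\sum_{i=1}^{n}\hat{\bar{w}}(X_i)b(X_i)^{\top}=b(x_{\text{new}})^{\top}$. This follows immediately from the definitions: substituting $\hat{\bar{w}}(x)=b(x_{\text{new}})^{\top}\hat{\bar{\mathbf{Q}}}^{-1}b(x)K(x)\hat{\nu}(x)$ collapses the sum to $b(x_{\text{new}})^{\top}\hat{\bar{\mathbf{Q}}}^{-1}\hat{\bar{\mathbf{Q}}}=b(x_{\text{new}})^{\top}$. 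Importantly, this identity holds for \emph{any} bounded weight function, so the fact that $\hat{\nu}=\hat{\pi}\hat{\kappa}$ is estimated rather than known causes no difficulty. Writing $f_{\text{DR},\theta}(Z_i)=\tau(X_i)+\varepsilon_i$ with $\mathbb{E}(\varepsilon_i\mid X_i)=0$ (valid because the nuisance data is independent of $\bar{\mathbf{Z}}$, so conditioning on $\hat{\nu}$ leaves the conditional law of $(A_i,Y_i)$ given $X_i$ intact), I would split $\hat{\bar{\tau}}_{\text{oracle}}(x_{\text{new}})-\tau(x_{\text{new}})$ into a deterministic bias term $\frac{1}{n}\sum_i\hat{\bar{w}}(X_i)\tau(X_i)-\tau(x_{\text{new}})$ and a stochastic term $\frac{1}{n}\sum_i\hat{\bar{w}}(X_i)\varepsilon_i$.

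For the bias term I would insert the order-$\lfloor s_{\tau}\rfloor$ Taylor expansion $\tau_{s_{\tau},x_{\text{new}}}$ of $\tau$ about $x_{\text{new}}$, which lies in the span of $b$ by Assumption~\ref{assu:(Oracle-smoothness)}. The reproducing identity makes the weights reproduce this polynomial exactly, so $\frac{1}{n}\sum_i\hat{\bar{w}}(X_i)\tau_{s_{\tau},x_{\text{new}}}(X_i)=\tau_{s_{\tau},x_{\text{new}}}(x_{\text{new}})=\tau(x_{\text{new}})$, and the bias reduces to $\frac{1}{n}\sum_i\hat{\bar{w}}(X_i)\{\tau(X_i)-\tau_{s_{\tau},x_{\text{new}}}(X_i)\}$. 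Since the kernel $K$ is bounded and vanishes outside $\|X_i-x_{\text{new}}\|\le h$, the $s_{\tau}$-smoothness of $\tau$ bounds each surviving remainder by $c\|X_i-x_{\text{new}}\|^{s_{\tau}}\le ch^{s_{\tau}}$, giving a bound of $ch^{s_{\tau}}\cdot\frac{1}{n}\sum_i|\hat{\bar{w}}(X_i)|$.

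It then remains to control the weight magnitudes. Using Assumption~\ref{assu:bounded-lambda-min} to bound $\|\hat{\bar{\mathbf{Q}}}^{-1}\|\lesssim_{\mathbb{P}}1$, the boundedness of $b$, and positivity (Assumption~\ref{assu:positivity}, which bounds $\hat{\nu}$), I would reduce both $\frac{1}{n}\sum_i|\hat{\bar{w}}(X_i)|$ and $\frac{1}{n^2}\sum_i\hat{\bar{w}}(X_i)^2$ to kernel sums. A first-moment computation using the near-uniform density (Assumption~\ref{assu:uniform-density}) gives $\mathbb{E}[K(X)]\lesssim 1$ and $\mathbb{E}[K(X)^2]\lesssim h^{-d}$; combined with the bandwidth condition $nh^d>1$ (Assumption~\ref{assu:limited-h}) and Chebyshev's inequality, this yields $\frac{1}{n}\sum_i K(X_i)\lesssim_{\mathbb{P}}1$ and $\frac{1}{n^2}\sum_i K(X_i)^2\lesssim_{\mathbb{P}}\frac{1}{nh^d}$. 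Hence $\frac{1}{n}\sum_i|\hat{\bar{w}}(X_i)|\lesssim_{\mathbb{P}}1$, so the bias term is $\lesssim_{\mathbb{P}}h^{s_{\tau}}$. For the stochastic term, its conditional mean is zero, and because the $\varepsilon_i$ are conditionally independent with variances $\mathrm{Var}(f_{\text{DR},\theta}(Z)\mid X)$ bounded by Assumptions~\ref{assu:regularity} and~\ref{assu:positivity}, its conditional variance is $\frac{1}{n^2}\sum_i\hat{\bar{w}}(X_i)^2\,\mathrm{Var}(\varepsilon_i\mid X_i)\lesssim\frac{1}{n^2}\sum_i K(X_i)^2\lesssim_{\mathbb{P}}\frac{1}{nh^d}$; Chebyshev then gives a stochastic contribution of order $\sqrt{1/(nh^d)}$. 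Summing the two pieces yields the claimed rate. (Note that only Assumptions~\ref{assu:regularity}, \ref{assu:positivity}, \ref{assu:limited-h}, \ref{assu:bounded-lambda-min}, \ref{assu:uniform-density} and~\ref{assu:(Oracle-smoothness)} are actually used, since the oracle employs the true $f_{\text{DR},\theta}$.)

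The main obstacle I anticipate is not any single deep step but the bookkeeping needed to make the conditioning rigorous: the weights $\hat{\bar{w}}$ are themselves random through $\hat{\nu}$ and $\bar{\mathbf{X}}$, so I must verify that conditioning on $\bar{\mathbf{X}}$ and $\hat{\nu}$ simultaneously freezes the weights, preserves $\mathbb{E}(\varepsilon_i\mid X_i)=0$, and keeps the kernel-sum concentration valid with high probability. Once it is recognized that both the reproducing identity and the weight-magnitude bounds hold for the estimated weights exactly as they would for known weights, the argument reduces to the standard local polynomial analysis of, e.g., \citet{Tsybakov2009-yb}.
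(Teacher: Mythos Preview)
Your proposal is correct and follows essentially the same route as the paper's proof: the paper likewise conditions on $(\hat{\nu},\bar{\mathbf{X}})$, establishes the polynomial-reproducing identity for the weights $\hat{\bar{w}}$ (valid for arbitrary $\hat{\nu}$), inserts the $\lfloor s_{\tau}\rfloor$-order Taylor polynomial of $\tau$ at $x_{\text{new}}$ to reduce the conditional bias to $h^{s_{\tau}}\cdot\frac{1}{n}\sum_i|\hat{\bar{w}}(X_i)|$, and bounds the conditional variance by $\frac{1}{n^2}\sum_i\hat{\bar{w}}(X_i)^2$. The only cosmetic difference is that the paper packages the weight-magnitude facts $\mathbb{E}\{\hat{\bar{1}}|\hat{\bar{w}}(X_i)|\}\lesssim 1$ and $\mathbb{E}\{\hat{\bar{1}}\hat{\bar{w}}(X_i)^2\}\lesssim h^{-d}$ into a separate lemma (using the indicator $\hat{\bar{1}}$ for the high-probability event in Assumption~\ref{assu:bounded-lambda-min}) and then applies Markov, whereas you rederive these bounds inline and invoke Chebyshev; your parenthetical remark that Assumptions~\ref{assu:bias-var-nuisance} and~\ref{assu:local-independence} are not actually needed here is also accurate.
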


Combining the results of Theorems \ref{thm:main-bnd} \& \ref{thm:oracle-bnd},
we see that 
\begin{equation}
\hat{\bar{\tau}}(x_{\text{new}})-\tau(x_{\text{new}})\lesssim_{\mathbb{P}}\sqrt{\frac{1}{nh^{d}}}+h^{s_{\tau}}+\mathsf{B}_{\mu}\mathsf{B}_{\pi}\label{eq:final-4-bnd}
\end{equation}
when $\hat{\pi},\hat{\kappa},\hat{\mu}$ and \textbf{\textit{$\bar{\mathbf{Z}}$}}
are mutually independent and Assumptions \ref{assu:regularity}-\ref{assu:local-independence}
and \ref{assu:(Oracle-smoothness)} hold. 

The bound in Eq (\ref{eq:final-4-bnd}) is at least as low as the
bound established by \citet{Kennedy2022-da}, which adds an additional
$\mathsf{B}_{\pi}^{2}$ term. Our bound is not as low as the minimax
bound established by \citet{Kennedy2022-cn}, although the latter
depends on a slightly stronger assumption. Roughly speaking, \citeauthor{Kennedy2022-cn}
assume approximate knowledge of the covariate distribution, which
replaces our need for the covariance estimator $\hat{\bar{\mathbf{Q}}}$
and allows the authors to replace our $\mathsf{B}_{\mu}\mathsf{B}_{\pi}$
term with $\mathsf{B}_{\mu}\mathsf{B}_{\pi}h^{s_{\mu}+s_{\pi}}$ (\citeyear{Kennedy2022-cn};
see their Eq (16)).

\section{Discussion}

We have argued that R-Learning implicitly employs a POR with stabilizing
weights, and that these weight are key to its success. We also consider
doubly robust estimators that incorporate IVW more directly, and show
that they can attain a convergence rate that is, to our knowledge,
the fastest available under our minimal assumptions (Eq (\ref{eq:final-4-bnd})).

The use of weighted regression highlights two fundamental differences
in the difficulty of estimating the CATE versus the ATE. The CATE
is harder to estimate than the ATE in the sense that it is inherently
a more complex target, and so it incurs a higher oracle error. Indeed,
if the underlying CATE function is sufficiently non-smooth, then the
oracle error erodes any advantage of using doubly robust methods over
plug-in (``T-Learner'') methods. However, roughly speaking, when
estimating the CATE we have the extra advantage of being able to use
IVW without inducing confounding bias, and so the (higher) oracle
error rate becomes easier to attain. Both differences disappear in
the homogeneous effect setting when the CATE is constant, in which
case IVW is a natural approach for improving the ATE estimate (see,
e.g., \citealp{Hullsiek2002-jc,Yao2021-od}).

Our work also highlights an important caveat for R-Learning, which
is that it requires all confounders to be used as inputs in any resulting
decision support tool. For example, consider the process of applying
R-Learning to observational study in order to build a tool to identify
patients who will benefit most from a treatment. Doctors using this
tool must have access to all variables $(X)$ that were used for confounding
adjustment in the study. If the study involved extensive lab tests,
then this requirement may not be feasible. Alternatively, if the study
adjusted for race and income, in addition to insurance status, then
doctors may face ethical concerns if they allow information about
a patient's race or income to influence their recommended treatments.
While this problem can be partially mitigated by fitting an additional
regression to predict the R-Learning estimate from a subset of allowed
decision factors $V$, R-Learning may still underperform due to the
fact that it internally estimates a target that is more complex than
is necessary. Here, approaches that directly estimate the coarsened
function $\mathbb{E}\left(\tau(X)|V\right)$ may improve accuracy
due to the low oracle error associated with estimating lower-dimensional
functions (see, e.g., \citealp{Fisher2023-ks}).

\section*{Acknowledgements}

The author is grateful for many conversations with Virginia Fisher
that inspired this manuscript, and for her thoughtful comments on
early drafts. This work also would not be possible without several
helpful conversations with Edward Kennedy. Many of the proofs in this
manuscript are based on those shown by \citet{Kennedy2022-da}. 

\bibliographystyle{icml2024}
\bibliography{_surv}

\appendix
\onecolumn

\section{\label{sec:Proof-of-Theorem}Proof of Theorem \ref{thm:main-bnd}}

Throughout the appendix, we will sometimes use \tpur{colored text}
when writing long equations to flag parts of an equation that change
from one line to the next (e.g., Line (\ref{eq:colors})). We use
I.E. as an abbreviation for ``iterating expectations.''
\begin{proof}
Throughout the sections below we will use the fact if $1_{n}A_{n}\lesssim_{\mathbb{P}}b_{n}$
and $1_{n}$ is an indicator satisfying $\text{Pr}(1_{n}=1)\rightarrow1$
(at any rate), then $A_{n}\lesssim_{\mathbb{P}}b_{n}$ as well. In
particular, we define $\hat{\bar{1}}$ to be the event that the inequalities
in Assumptions \ref{assu:bias-var-nuisance} and \ref{assu:bounded-lambda-min}
hold. By these same assumptions, $\text{Pr}(\hat{\bar{1}}=1)\rightarrow1$.
When attempting to bound any given term $A_{n}$ in probability, it
will be sufficient to bound $\hat{\bar{1}}A_{n}$.

We can now present a proof outline. First, we decompose the error
with respect to the oracle as
\begin{align*}
\hat{\bar{\tau}}(x_{\text{new}})-\hat{\bar{\tau}}_{\text{oracle}}(x_{\text{new}}) & =\bar{\mathbb{P}}_{n}\left\{ \hat{\bar{w}}\left(\left(f_{1,\hat{\theta}}-f_{0,\hat{\theta}}\right)-\left(f_{1,\theta}-f_{0,\theta}\right)\right)\right\} \\
 & =\bar{\mathbb{P}}_{n}\left\{ \hat{\bar{w}}\left(f_{1,\hat{\theta}}-f_{1,\theta}\right)\right\} -\bar{\mathbb{P}}_{n}\left\{ \hat{\bar{w}}\left(f_{0,\hat{\theta}}-f_{0,\theta}\right)\right\} .
\end{align*}

Due to the symmetry of the problem, proving that either one of the
above terms is bounded will be sufficient. Without loss of generality
(WLOG), we focus on the first term. After multiplying by $\hat{\bar{1}}$,
which does not change the bound, we have
\begin{align}
\hat{\bar{1}}\bar{\mathbb{P}}_{n}\left\{ \hat{\bar{w}}\left(f_{1,\hat{\theta}}-f_{1,\theta}\right)\right\}  & =\hat{\bar{1}}\bar{\mathbb{P}}_{n}\left[\hat{\bar{w}}\left\{ \hat{\mu}_{1}-\mu_{1}+\frac{A}{\hat{\pi}}\left(Y-\hat{\mu}_{1}\right)-\frac{A}{\pi}\left(Y-\mu_{1}\right)\right\} \right]\nonumber \\
 & =\hat{\bar{1}}\bar{\mathbb{P}}_{n}\left[\hat{\bar{w}}\left\{ \hat{\mu}_{1}-\mu_{1}\mpur{-\frac{A}{\pi}\hat{\mu}_{1}}+\mblue{\frac{A}{\pi}\mu_{1}}\right.\right.\nonumber \\
 & \hspace{1em}\hspace{1em}+\frac{A}{\hat{\pi}}Y\mred{-\frac{A}{\hat{\pi}}\mu_{1}}-\frac{A}{\pi}Y+\frac{A}{\pi}\mu_{1}\nonumber \\
 & \hspace{1em}\hspace{1em}\left.\left.-\frac{A}{\hat{\pi}}\hat{\mu}_{1}\mred{+\frac{A}{\hat{\pi}}\mu_{1}}\mpur{+\frac{A}{\pi}\hat{\mu}_{1}}-\mblue{\frac{A}{\pi}\mu_{1}}\right\} \right]\label{eq:colors}\\
 & =\hat{\bar{1}}\bar{\mathbb{P}}_{n}\left[\hat{\bar{w}}\left(1-\frac{A}{\pi}\right)(\hat{\mu}_{1}-\mu_{1})\right]\label{eq:mean01}\\
 & \hspace{1em}\hspace{1em}+\hat{\bar{1}}\bar{\mathbb{P}}_{n}\left[\hat{\bar{w}}A\left(\frac{1}{\hat{\pi}}-\frac{1}{\pi}\right)(Y-\mu_{1})\right]\label{eq:mean02}\\
 & \hspace{1em}\hspace{1em}-\hat{\bar{1}}\bar{\mathbb{P}}_{n}\left[\hat{\bar{w}}A\left(\frac{1}{\hat{\pi}}-\frac{1}{\pi}\right)(\hat{\mu}_{1}-\mu_{1})\right].\label{eq:DR-term}
\end{align}

Section \ref{subsec:Helpful-Lemmas}, below, shows that the weights
$\hat{\bar{w}}$ satisfy $\mathbb{E}\left(\hat{\bar{1}}\hat{\bar{w}}(X_{i})^{2}\right)\lesssim1/h^{d}$
(as in \citet{Kennedy2022-da}'s Lemma 1). Under the condition that
$(\hat{\pi},\hat{\kappa},\hat{\mu}_{1})\perp\bar{\mathbf{Z}}$, Section
\ref{subsec:Mean-Zero-Terms} shows that Lines (\ref{eq:mean01})
\& (\ref{eq:mean02}) are weighted averages of terms that are $iid$
and mean zero, conditional $\hat{\pi},\hat{\kappa},\hat{\mu}_{1}$
and $\bar{\mathbf{X}}_{\text{all}}$. It will follow that Lines (\ref{eq:mean01})
\& (\ref{eq:mean02}) have expected conditional variance bounded by
$1/\left(nh^{d}\right)$. Thus, Lines (\ref{eq:mean01}) \& (\ref{eq:mean02})
are 
\begin{equation}
\lesssim_{\mathbb{P}}\frac{1}{\sqrt{nh^{d}}}\label{eq:first-2-mean-0}
\end{equation}
 by Markov's Inequality (see Section \ref{subsec:Mean-Zero-Terms}
for details). This fact holds for all forms of independence considered
in Theorem \ref{thm:main-bnd} (Points \ref{enu:(4-way-CF)-If}, \ref{enu:(3-way-CF)-If}
\& \ref{enu:(2-way-CF)-If}), as it depends only on $(\hat{\pi},\hat{\kappa},\hat{\mu}_{1})\perp\bar{\mathbf{Z}}$.
As an aside, these same steps can be used to show the first equality
in Eq (\ref{eq:bias-identity}).

Line (\ref{eq:DR-term}) does \emph{not }have mean zero given $\hat{\pi},\hat{\kappa},\hat{\mu}_{1}$
and $\bar{\mathbf{X}}_{\text{all}}$, and so constitutes the bias
relative to the oracle. These terms are more challenging to tackle
due to the correlations between the $\hat{\bar{\mathbf{Q}}}$ matrix
(contained within $\hat{\bar{w}}$) and the $1/\hat{\pi}$ nuisance
estimate. However, we can separate these quantities using the Cauchy
Schwartz inequality. Line (\ref{eq:DR-term}) becomes 
\begin{align}
 & \hat{\bar{1}}\bar{\mathbb{P}}_{n}\left\{ \hat{\bar{w}}A\left(\frac{1}{\hat{\pi}}-\frac{1}{\pi}\right)(\hat{\mu}_{1}-\mu_{1})\right\} \nonumber \\
 & \hspace{1em}\hspace{1em}=\hat{\bar{1}}b(x_{\text{new}})^{\top}\hat{\bar{\mathbf{Q}}}^{-1}\bar{\mathbb{P}}_{n}\left\{ b(X_{i})K(X_{i})\hat{\nu}(X_{i})A_{i}\left(\frac{1}{\hat{\pi}}-\frac{1}{\pi}\right)(\hat{\mu}_{1}-\mu_{1})\right\}  &  & \text{def of }\hat{\bar{w}}\nonumber \\
 & \hspace{1em}\hspace{1em}\leq\hat{\bar{1}}\left\Vert \hat{\bar{\mathbf{Q}}}^{-1}b(x_{\text{new}})\right\Vert \;\left\Vert \bar{\mathbb{P}}_{n}\left\{ bK\hat{\nu}A\left(\hat{\pi}^{-1}-\pi^{-1}\right)(\hat{\mu}_{1}-\mu_{1})\right\} \right\Vert  &  & \text{Cauchy Schwartz}\nonumber \\
 & \hspace{1em}\hspace{1em}\lesssim\hat{\bar{1}}\left\Vert \bar{\mathbb{P}}_{n}\left\{ bK\hat{\nu}A\left(\hat{\pi}^{-1}-\pi^{-1}\right)(\hat{\mu}_{1}-\mu_{1})\right\} \right\Vert  &  & \text{def of }\hat{\bar{1}}\text{ \& }b\nonumber \\
 & \hspace{1em}\hspace{1em}=\left[\sum_{l=1}^{L}\hat{\bar{1}}\bar{\mathbb{P}}_{n}\left\{ b_{\ell}K\hat{\nu}A\left(\hat{\pi}^{-1}-\pi^{-1}\right)(\hat{\mu}_{1}-\mu_{1})\right\} ^{2}\right]^{1/2}\nonumber \\
 & \hspace{1em}\hspace{1em}\leq\sum_{l=1}^{L}\left|\hat{\bar{1}}\bar{\mathbb{P}}_{n}\left\{ b_{\ell}K\hat{\kappa}\hat{\pi}A\left(\hat{\pi}^{-1}-\pi^{-1}\right)(\hat{\mu}_{1}-\mu_{1})\right\} \right|,\label{eq:L-sum}
\end{align}
where the last $\leq$ comes from the definition of $\hat{\nu},$
and from the fact that $\sum_{j=1}^{J}a_{j}^{2}\leq\left(\sum_{j=1}^{J}a_{j}\right)^{2}$
for any nonnegative sequences of values $\{a_{j},\dots,a_{J}\}$.

Appealing to Markov's Inequality, we tackle Line (\ref{eq:L-sum})
by bounding the second moment of each summand. For Point \ref{enu:(4-way-CF)-If},
we use the fact that $\mathbb{E}(V^{2})=Var(V)+\mathbb{E}(V)^{2}$
for any random variable $V$ to bound
\begin{align}
 & \mathbb{E}\left[\mathbb{E}\left\{ \hat{\bar{1}}\bar{\mathbb{P}}_{n}\left\{ b_{\ell}K\hat{\kappa}\hat{\pi}A\left(\hat{\pi}^{-1}-\pi^{-1}\right)(\hat{\mu}_{1}-\mu_{1})\right\} ^{2}|\mathbf{X}_{\text{all}},\hat{\kappa}\right\} \right]\nonumber \\
 & \;\;=\mathbb{E}\left[\mathbb{E}\left\{ \hat{\bar{1}}\bar{\mathbb{P}}_{n}\left\{ b_{\ell}K\hat{\kappa}\hat{\pi}A\left(\hat{\pi}^{-1}-\pi^{-1}\right)(\hat{\mu}_{1}-\mu_{1})\right\} |\mathbf{X}_{\text{all}},\hat{\kappa}\right\} ^{2}\right]\label{eq:cond-bias-DR}\\
 & \;\;\;\;\;\;+\mathbb{E}\left[Var\left\{ \hat{\bar{1}}\bar{\mathbb{P}}_{n}\left\{ b_{\ell}K\hat{\kappa}\hat{\pi}A\left(\hat{\pi}^{-1}-\pi^{-1}\right)(\hat{\mu}_{1}-\mu_{1})\right\} |\mathbf{X}_{\text{all}},\hat{\kappa}\right\} \right]\label{eq:cond-var-DR}
\end{align}
Section \ref{subsec:Bounding-bias-DR} shows that Line (\ref{eq:cond-bias-DR})
is 
\[
\lesssim k^{-2(s_{\mu}+s_{\pi})/d}
\]
when $\hat{\pi}\perp\hat{\kappa}$, using steps similar to those in
Eq (\ref{eq:bias-identity}).\textcolor{red}{}

Section \ref{subsec:Bounding-var-DR} shows that Line (\ref{eq:cond-var-DR})
is $\lesssim1/\left(nh^{d}\right)$. Thus, Eq (\ref{eq:L-sum}) is
\begin{align*}
 & \lesssim_{\mathbb{P}}\sqrt{\frac{1}{nh^{d}}}+k^{-(s_{\mu}+s_{\pi})/d}.
\end{align*}
This, combined with Line (\ref{eq:first-2-mean-0}), completes the
proof of Point \ref{enu:(4-way-CF)-If}. 

Section \ref{subsec:Proof-of-3way} shows that Line (\ref{eq:L-sum})
is 
\[
\lesssim_{\mathbb{P}}k^{-(s_{\mu}-s_{\pi})/d}+\frac{k^{1-s_{\mu}/d}}{n}+\sqrt{\frac{1}{nh^{d}}}
\]
under the conditions of Point \ref{enu:(3-way-CF)-If}, and Section
\ref{subsec:Proof-of-2way} shows that Line (\ref{eq:L-sum}) is 
\[
\lesssim_{\mathbb{P}}\frac{k}{n}+\frac{k^{1/2-s_{\mu}/d}}{\sqrt{n}}+\frac{k^{1/2-s_{\pi}/d}}{\sqrt{n}}+k^{-\left(s_{\mu}+s_{\pi}\right)/d}
\]
under the conditions of Point \ref{enu:(2-way-CF)-If}. This completes
the proof for Points \ref{enu:(3-way-CF)-If} \& \ref{enu:(2-way-CF)-If}.
\end{proof}

\subsection{Bound on weights\label{subsec:Helpful-Lemmas}}

Here we show results for the weights $\hat{\bar{w}}$. Our approach
closely follows classic approaches for LP regression (e.g., \citealp{Tsybakov2009-yb};
see also \citealp{Kennedy2022-da}). Let $\mathcal{I}(x)=1(\|x-x_{\text{new}}\|\leq h)$,
so that $K(x)=0$ and $\hat{\bar{w}}(x)=0$ whenever $\mathcal{I}(x)=0$
by the definitions of $K$ and $\hat{\bar{w}}$.
\begin{lem}
(Bounded weights) \label{lem:(Bounded-weights)} Under Assumptions
\ref{assu:limited-h}, \ref{assu:bounded-lambda-min} \& \ref{assu:uniform-density}:
\begin{enumerate}
\item \label{enu:K1-vals}$K(X)\lesssim\frac{1}{h^{d}}\mathcal{I}(X)$,
and $\mathbb{E}\left(|K(X)|\right)\lesssim\frac{1}{h^{d}}\mathbb{E}\left(\mathcal{I}(X)\right)\lesssim1$;
\item \label{enu:K2} $\mathbb{E}\left[\left\{ \frac{1}{n}\sum_{i=1}^{n}|K(X_{i})|\right\} ^{2}\right]\lesssim1$;
\item \label{enu:max-w}$\hat{\bar{1}}|\hat{\bar{w}}(x)|\lesssim\mathcal{I}(x)/h^{d}$
for any fixed $x$;
\item \label{enu:exp-w}$\mathbb{E}\left\{ \hat{\bar{1}}|\hat{\bar{w}}(X_{i})|\right\} \lesssim1$;
and
\item \label{enu:exp-w2}$\mathbb{E}\left\{ \hat{\bar{1}}\hat{\bar{w}}(X_{i})^{2}\right\} \lesssim1/h^{d}$.
\end{enumerate}
\end{lem}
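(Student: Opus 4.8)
The plan is to establish the five claims in order, since each later one leans on the earlier. The only external inputs needed are the boundedness of $\texttt{kern}$ and of the basis $b$, the eigenvalue control packaged inside the event $\hat{\bar{1}}$ (Assumption \ref{assu:bounded-lambda-min}), the density bound of Assumption \ref{assu:uniform-density}, and the bandwidth condition $n>1/h^d$ of Assumption \ref{assu:limited-h}. For Claim \ref{enu:K1-vals} I would use only the two defining properties of the kernel: $\texttt{kern}$ is bounded, say $\texttt{kern}\le c$, and vanishes outside $[-1,1]$. Hence $\texttt{kern}(\|x-x_{\text{new}}\|/h)$ is nonzero only when $\|x-x_{\text{new}}\|\le h$, i.e. when $\mathcal{I}(x)=1$, giving $K(x)\le (c/h^d)\mathcal{I}(x)$ at once. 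Taking expectations and invoking Assumption \ref{assu:uniform-density}, $\mathbb{E}(|K(X)|)\le (c/h^d)\text{Pr}(\|X-x_{\text{new}}\|\le h)\lesssim (c/h^d)h^d=c$, which also records $\mathbb{E}(\mathcal{I}(X))\lesssim h^d$ for later use.

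For Claim \ref{enu:K2} I would expand the square as $\tfrac{1}{n^2}\sum_{i,j}\mathbb{E}(|K(X_i)||K(X_j)|)$ and split into diagonal and off-diagonal parts. On the diagonal, $K(X_i)^2\le (c^2/h^{2d})\mathcal{I}(X_i)$ from Claim \ref{enu:K1-vals}, so $\mathbb{E}(K(X_i)^2)\lesssim h^{-2d}\cdot h^d=h^{-d}$; summing the $n$ diagonal terms yields $\lesssim 1/(nh^d)$, which is $\lesssim 1$ precisely by the bandwidth Assumption \ref{assu:limited-h}. Off the diagonal, independence of the $X_i$ factorizes the expectation, and each factor is $\lesssim 1$ by Claim \ref{enu:K1-vals}, so the $n(n-1)$ terms contribute $\lesssim 1$. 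Adding the two pieces gives the claim.

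Claim \ref{enu:max-w} is the crux, and I would obtain it as a deterministic (pointwise) bound valid on the event $\hat{\bar{1}}$. Writing $\hat{\bar{w}}(x)=b(x_{\text{new}})^{\top}\hat{\bar{\mathbf{Q}}}^{-1}b(x)\,K(x)\hat{\nu}(x)$ and bounding the quadratic form by $\|b(x_{\text{new}})\|\,\|\hat{\bar{\mathbf{Q}}}^{-1}\|\,\|b(x)\|$, I would use three facts: $b$ is bounded so $\|b(x_{\text{new}})\|,\|b(x)\|\lesssim 1$; on $\hat{\bar{1}}$ we have $\|\hat{\bar{\mathbf{Q}}}^{-1}\|=1/\lambda_{\min}(\hat{\bar{\mathbf{Q}}})\lesssim 1$ by Assumption \ref{assu:bounded-lambda-min}; and $\hat{\nu}=\hat{\pi}\hat{\kappa}$ is bounded (Assumption \ref{assu:positivity}). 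Multiplying these against $K(x)\lesssim \mathcal{I}(x)/h^d$ from Claim \ref{enu:K1-vals} yields $\hat{\bar{1}}|\hat{\bar{w}}(x)|\lesssim \mathcal{I}(x)/h^d$ with a constant free of the data.

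Finally, Claims \ref{enu:exp-w} and \ref{enu:exp-w2} follow by substituting $x=X_i$ into the pointwise bound of Claim \ref{enu:max-w} and taking expectations; because that bound is deterministic given the data, the correlation between $X_i$ and the $\hat{\bar{\mathbf{Q}}}$ sitting inside $\hat{\bar{w}}$ is harmless. For Claim \ref{enu:exp-w}, $\mathbb{E}\{\hat{\bar{1}}|\hat{\bar{w}}(X_i)|\}\lesssim h^{-d}\mathbb{E}(\mathcal{I}(X_i))\lesssim 1$; for Claim \ref{enu:exp-w2}, using $\mathcal{I}^2=\mathcal{I}$ gives $\hat{\bar{1}}\hat{\bar{w}}(X_i)^2\lesssim \mathcal{I}(X_i)/h^{2d}$, hence $\mathbb{E}\{\hat{\bar{1}}\hat{\bar{w}}(X_i)^2\}\lesssim h^{-2d}\mathbb{E}(\mathcal{I}(X_i))\lesssim 1/h^d$. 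The main obstacle is really just Claim \ref{enu:max-w}: one must keep track of the fact that the eigenvalue control only holds on $\hat{\bar{1}}$, and verify that the resulting bound is genuinely pointwise, so that the subsequent expectations in Claims \ref{enu:exp-w}--\ref{enu:exp-w2} never require disentangling the dependence of the random matrix $\hat{\bar{\mathbf{Q}}}$ on the very point $X_i$ being evaluated.
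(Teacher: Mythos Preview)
Your proposal is correct and follows essentially the same route as the paper's proof: Claim~\ref{enu:K1-vals} from the kernel's support and boundedness plus Assumption~\ref{assu:uniform-density}; Claim~\ref{enu:K2} by a diagonal/off-diagonal split with Assumption~\ref{assu:limited-h} handling the diagonal; Claim~\ref{enu:max-w} by Cauchy--Schwarz together with the eigenvalue bound encoded in $\hat{\bar{1}}$; and Claims~\ref{enu:exp-w}--\ref{enu:exp-w2} by taking expectations of the pointwise bound. The only cosmetic differences are that the paper first replaces $|K|$ by $h^{-d}\mathcal{I}$ before expanding the square in Claim~\ref{enu:K2}, and in Claim~\ref{enu:max-w} applies Cauchy--Schwarz to the vector $\hat{\bar{\mathbf{Q}}}^{-1}b(x)K(x)\hat{\nu}(x)$ rather than pulling out the operator norm $\|\hat{\bar{\mathbf{Q}}}^{-1}\|$ separately, but these amount to the same bounds.
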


\begin{proof}
Point \ref{enu:K1-vals} comes immediately from the definitions of
$K$ and $\mathcal{I}$, and from Assumption \ref{assu:uniform-density}. 

For Point \ref{enu:K2},
\begin{align*}
\mathbb{E}\left[\left\{ \frac{1}{n}\sum_{i=1}^{n}|K(X_{i})|\right\} ^{2}\right] & \lesssim\frac{1}{n^{2}h^{2d}}\mathbb{E}\left[\left\{ \sum_{i=1}^{n}\mathcal{I}(X_{i})\right\} ^{2}\right] &  & \text{Point \ref{enu:K1-vals}}\\
 & =\frac{1}{n^{2}h^{2d}}\left[\mathbb{E}\left\{ \sum_{i=1}^{n}\mathcal{I}(X_{i})\right\} +\mathbb{E}\left\{ \sum_{i=1}^{n}\mathcal{I}(X_{i})\sum_{j\neq i}^{n}\mathbb{E}\left(\mathcal{I}(X_{j})|X_{i}\right)\right\} \right]\\
 & \lesssim\frac{1}{n^{2}h^{2d}}\left[nh^{d}+n(n-1)h^{2d}\right] &  & \text{Assm \ref{assu:uniform-density}}\\
 & =\frac{1}{nh^{d}}+\frac{1}{n^{2}}\left[n(n-1)\right]\\
 & \leq1. &  & \text{Assm \ref{assu:limited-h}.}
\end{align*}

For Point \ref{enu:max-w},
\begin{align*}
\hat{\bar{1}}|\hat{\bar{w}}(x)| & \leq\hat{\bar{1}}\|b(x_{\text{new}})\|\;\|\hat{\bar{\mathbf{Q}}}^{-1}b(x)K(x)\hat{\pi}(x)\| &  & \text{Cauchy Schwartz}\\
 & \lesssim\hat{\bar{1}}\|\hat{\bar{\mathbf{Q}}}^{-1}b(x)K(x)\hat{\nu}(x)\| &  & \text{def of }b\\
 & \leq\frac{\hat{\bar{1}}}{\lambda_{\min}\left(\hat{\bar{\mathbf{Q}}}\right)}\|b(x)K(x)\hat{\nu}(x)\|\\
 & \lesssim\|b(x)K(x)\hat{\nu}(x)\| &  & \text{def of }\hat{\bar{1}}\\
 & \leq|K(x)| &  & \text{def of }b,\text{Assm \ref{assu:positivity}}\\
 & \lesssim\frac{1}{h^{d}}\mathcal{I}(x) &  & \text{Point \ref{enu:K1-vals}.}
\end{align*}
Point \ref{enu:exp-w} follows from Points \ref{enu:K1-vals} \& \ref{enu:max-w}.
Similarly, for Point \ref{enu:exp-w2},
\[
\mathbb{E}\left\{ \hat{\bar{1}}\hat{\bar{w}}(X_{i})^{2}\right\} \lesssim\frac{1}{h^{2d}}\mathbb{E}\left\{ \mathcal{I}(x)\right\} \lesssim\frac{1}{h^{d}},
\]
where the first $\lesssim$ is from Point \ref{enu:max-w} and the
second is from Assumption \ref{assu:uniform-density}.
\end{proof}

\subsection{\label{subsec:Mean-Zero-Terms}Showing Lines (\ref{eq:mean01}) \&
(\ref{eq:mean02}) are $\lesssim_{\mathbb{P}}\sqrt{1/(nh^{d})}$}

Line (\ref{eq:mean01}) has conditional expectation 
\begin{align*}
 & \hat{\bar{1}}\mathbb{E}\left[\bar{\mathbb{P}}_{n}\left(\hat{\bar{w}}\left(1-\frac{A}{\pi}\right)(\hat{\mu}_{1}-\mu_{1})\right)|\bar{\mathbf{X}}_{\text{all}},\hat{\mu}_{1},\hat{\pi},\hat{\kappa}\right]\\
 & =\hat{\bar{1}}\bar{\mathbb{P}}_{n}\left(\hat{\bar{w}}\left(1-\frac{\mpur{\pi}}{\pi}\right)(\hat{\mu}_{1}-\mu_{1})\right)\\
 & =0
\end{align*}
and conditional variance
\begin{align*}
 & \hat{\bar{1}}Var\left[\bar{\mathbb{P}}_{n}\left(\hat{\bar{w}}\left(1-\frac{A}{\pi}\right)(\hat{\mu}_{1}-\mu_{1})\right)|\bar{\mathbf{X}}_{\text{all}},\hat{\mu}_{1},\hat{\pi},\hat{\kappa}\right]\\
 & \hspace{1em}=\frac{\hat{\bar{1}}}{n^{2}}\sum_{i=1}^{n}\hat{\bar{w}}(X_{i})^{2}(\hat{\mu}_{1}(X_{i})-\mu_{1}(X_{i}))^{2}\frac{1}{\pi(X_{i})^{2}}Var\left[A|\bar{\mathbf{X}}_{\text{all}}\right]\\
 & \hspace{1em}\lesssim\frac{\hat{\bar{1}}}{n^{2}}\sum_{i=1}^{n}\hat{\bar{w}}(X_{i})^{2}(\hat{\mu}_{1}(X_{i})-\mu_{1}(X_{i}))^{2} &  & \text{Assm \ref{assu:positivity}}\\
 & \hspace{1em}\lesssim_{\mathbb{P}}\frac{1}{n^{2}}\sum_{i=1}^{n}\mathbb{E}\left[\hat{\bar{1}}\hat{\bar{w}}(X_{i})^{2}\mathbb{E}\left\{ (\hat{\mu}_{1}(X_{i})-\mu_{1}(X_{i}))^{2}|\mathbf{X}_{\text{all}}\right\} \right] &  & \text{Markov's Ineq}\\
 & \hspace{1em}\lesssim\frac{1}{n^{2}}\sum_{i=1}^{n}\mathbb{E}\left[\hat{\bar{1}}\hat{\bar{w}}(X_{i})^{2}\right] &  & \text{def of }\hat{\bar{1}}\\
 & \hspace{1em}\lesssim\frac{1}{nh^{d}}. &  & \text{Lemma \ref{lem:(Bounded-weights)}.\ref{enu:exp-w2}.}
\end{align*}
Combining this with the fact that Line (\ref{eq:mean01}) is mean
zero given $\bar{\mathbf{X}}_{\text{all}},\hat{\mu}_{1},\hat{\pi}$,
and $\hat{\kappa}$ we have
\begin{align*}
 & \hat{\bar{1}}\mathbb{E}\left[\bar{\mathbb{P}}_{n}\left(\hat{\bar{w}}\left(1-\frac{A}{\pi}\right)(\hat{\mu}_{1}-\mu_{1})\right)^{2}|\bar{\mathbf{X}}_{\text{all}},\hat{\mu}_{1},\hat{\pi},\hat{\kappa}\right]\\
 & =\hat{\bar{1}}Var\left[\bar{\mathbb{P}}_{n}\left(\hat{\bar{w}}\left(1-\frac{A}{\pi}\right)(\hat{\mu}_{1}-\mu_{1})\right)^{2}|\bar{\mathbf{X}}_{\text{all}},\hat{\mu}_{1},\hat{\pi},\hat{\kappa}\right]\\
 & \lesssim_{\mathbb{P}}\frac{1}{nh^{d}},
\end{align*}
which implies that Line (\ref{eq:mean01}) is $\lesssim_{\mathbb{P}}\sqrt{\frac{1}{nh^{d}}}$
by Markov's Inequality (see Lemma 2 of \citealp{Kennedy2022-da} for
details).

Similarly, Line (\ref{eq:mean02}) has conditional expectation
\begin{align}
 & \mathbb{E}\left[\bar{\mathbb{P}}_{n}\left(\hat{\bar{w}}A\left(\frac{1}{\hat{\pi}}-\frac{1}{\pi}\right)(Y-\mu_{1})\right)|\bar{\mathbf{X}}_{\text{all}},\hat{\mu}_{1},\hat{\pi},\hat{\kappa}\right]\nonumber \\
 & =\bar{\mathbb{P}}_{n}\left[\hat{\bar{w}}\left(\frac{1}{\hat{\pi}}-\frac{1}{\pi}\right)\mathbb{E}\left\{ A(Y-\mu_{1})|X\right\} \right]\nonumber \\
 & =\bar{\mathbb{P}}_{n}\left[\hat{\bar{w}}\left(\frac{1}{\hat{\pi}}-\frac{1}{\pi}\right)\mathbb{E}\left\{ Y-\mu_{1}|X,A=1\right\} \pi(X)\right]\nonumber \\
 & =0.\label{eq:cond-exp-YA}
\end{align}

and conditional variance
\begin{align*}
 & Var\left[\bar{\mathbb{P}}_{n}\left(\hat{\bar{w}}A\left(\frac{1}{\hat{\pi}}-\frac{1}{\pi}\right)(Y-\mu_{1})\right)|\bar{\mathbf{X}}_{\text{all}},\hat{\mu}_{1},\hat{\pi},\hat{\kappa}\right]\\
 & \hspace{1em}=\frac{1}{n^{2}}\sum_{i=1}^{n}\hat{\bar{w}}(X_{i})^{2}\left(\frac{1}{\hat{\pi}(X_{i})}-\frac{1}{\pi(X_{i})}\right)^{2}Var\left[A(Y-\mu_{1})|\bar{\mathbf{X}}_{\text{all}}\right]\\
 & \hspace{1em}\lesssim\frac{1}{n^{2}}\sum_{i=1}^{n}\hat{\bar{w}}(X_{i})^{2} &  & \text{Assms \ref{assu:regularity} \& \ref{assu:positivity}}\\
 & \hspace{1em}\lesssim_{\mathbb{P}}\frac{1}{nh^{d}} &  & \text{Lemma \ref{lem:(Bounded-weights)}.\ref{enu:exp-w2} + Markov's Ineq.}
\end{align*}
Thus, the same reasoning implies that Line (\ref{eq:mean02}) is $\lesssim_{\mathbb{P}}\sqrt{\frac{1}{nh^{d}}}$.

\subsection{\label{subsec:Bounding-bias-DR}Showing Line (\ref{eq:cond-bias-DR})
is $\lesssim k^{-2\left(s_{\mu}+s_{\pi}\right)}$ when $\hat{\pi}\perp\hat{\kappa}$}

Let $\hat{1}$ be the indicator that the inequalities in Assumption
\ref{assu:bias-var-nuisance} hold, where $\hat{1}\geq\hat{\bar{1}}$,
and $\hat{1}$ depends only on $\mathbf{X}_{\text{all}}$. The inner
expectation in Line (\ref{eq:cond-bias-DR}) equals
\begin{align}
 & \mathbb{E}\left[\hat{\bar{1}}\bar{\mathbb{P}}_{n}\left\{ b_{\ell}K\hat{\kappa}\hat{\pi}A\left(\hat{\pi}^{-1}-\pi^{-1}\right)(\hat{\mu}_{1}-\mu_{1})\right\} |\mathbf{X}_{\text{all}},\hat{\kappa}\right]\nonumber \\
 & \leq\frac{\mblue{\hat{1}}}{n}\sum_{i=1}^{n}b_{\ell}(X_{i})K(X_{i})\hat{\kappa}(X_{i})\nonumber \\
 & \hspace{1em}\hspace{1em}\hspace{1em}\times\mathbb{E}\left[A\left(1-\frac{\hat{\pi}(X_{i})}{\pi(X_{i})}\right)|\mathbf{X}_{\text{all}}\right]\mathbb{E}\left[\hat{\mu}_{1}(X_{i})-\mu_{1}(X_{i})|\mathbf{X}_{\text{all}}\right] &  & \text{4-way independence}\label{eq:kap-pi}\\
 & \lesssim\frac{\hat{1}k^{-s_{\mu}}}{n}\sum_{i=1}^{n}|K(X_{i})|\;\left|\mathbb{E}\left[A\left(1-\frac{\hat{\pi}(X_{i})}{\pi(X_{i})}\right)|\mathbf{X}_{\text{all}}\right]\right| &  & \text{def of }\hat{1}\nonumber \\
 & =\frac{\hat{1}k^{-s_{\mu}}}{n}\sum_{i=1}^{n}|K(X_{i})|\;\left|\mathbb{E}\left[\mathbb{E}\left(A|\mathbf{X}_{\text{all}},\hat{\pi}\right)\left(1-\frac{\hat{\pi}(X_{i})}{\pi(X_{i})}\right)|\mathbf{X}_{\text{all}}\right]\right| &  & \text{I.E.}\nonumber \\
 & =\frac{\hat{1}k^{-s_{\mu}}}{n}\sum_{i=1}^{n}|K(X_{i})|\;\left|\mathbb{E}\left[\pi(X_{i})-\hat{\pi}(X_{i})|\mathbf{X}_{\text{all}}\right]\right| &  & \text{by }\mathbb{E}\left(A_{i}|\mathbf{X}_{\text{all}},\hat{\pi}\right)=\pi(X_{i})\nonumber \\
 & \lesssim\frac{k^{-s_{\mu}-s_{\pi}}}{n}\sum_{i=1}^{n}|K(X_{i})| &  & \text{def of }\hat{1}.\nonumber 
\end{align}
Note that Line (\ref{eq:kap-pi}) requires $\hat{\pi}(x)\perp\hat{\kappa}(x)$
in order to remove the conditioning on $\hat{\kappa}$ from the expectation
term containing $\hat{\pi}$. 

Thus, Line (\ref{eq:cond-bias-DR}) is 
\begin{align*}
 & \lesssim k^{-2\left(s_{\mu}+s_{\pi}\right)}\mathbb{E}\left[\left\{ \frac{1}{n}\sum_{i=1}^{n}|K(X_{i})|\right\} ^{2}\right]\lesssim k^{-2\left(s_{\mu}+s_{\pi}\right)}
\end{align*}
where the second $\lesssim$ comes from Lemma \ref{lem:(Bounded-weights)}.\ref{enu:K2}.

\subsection{\label{subsec:Bounding-var-DR}Showing Line (\ref{eq:cond-var-DR})
is $\lesssim1/(nh^{d})$}

Line (\ref{eq:cond-var-DR}) is the expected value of

\begin{align}
 & Var\left[\;\;\;\;\;\;\;\;\,\;\;\;\;\hat{\bar{1}}\bar{\mathbb{P}}_{n}\left\{ b_{\ell}K\hat{\kappa}A\left(1-\frac{\hat{\pi}}{\pi}\right)(\hat{\mu}_{1}-\mu_{1})\right\} \;|\;\mathbf{X}_{\text{all}},\hat{\kappa}\;\right]\nonumber \\
 & =Var\left[\;\;\mathbb{E}\left[\;\hat{\bar{1}}\bar{\mathbb{P}}_{n}\left\{ b_{\ell}K\hat{\kappa}A\left(1-\frac{\hat{\pi}}{\pi}\right)(\hat{\mu}_{1}-\mu_{1})\right\} \;|\;\mathbf{X}_{\text{all}},\hat{\pi},\hat{\kappa},\hat{\mu}_{1}\;\;\right]\;\;|\;\;\mathbf{X}_{\text{all}},\hat{\kappa}\right]\nonumber \\
 & \hspace{1em}+\mathbb{E}\left[Var\left[\hat{\bar{1}}\bar{\mathbb{P}}_{n}\left\{ b_{\ell}K\hat{\kappa}A\left(1-\frac{\hat{\pi}}{\pi}\right)(\hat{\mu}_{1}-\mu_{1})\right\} \;|\;\mathbf{X}_{\text{all}},\hat{\pi},\hat{\kappa},\hat{\mu}_{1}\;\;\right]\;\;|\;\;\mathbf{X}_{\text{all}},\hat{\kappa}\right] &  & \text{Law of Total Var}\nonumber \\
 & =Var\left[\;\frac{\hat{\bar{1}}}{n}\;\sum_{i=1}^{n}b_{\ell}K\hat{\kappa}\mblue{\left(\pi-\hat{\pi}\right)}(\hat{\mu}_{1}-\mu_{1})|\mathbf{X}_{\text{all}},\hat{\kappa}\right]\label{eq:apply-EK-sketch}\\
 & \hspace{1em}+\mathbb{E}\left[\frac{\hat{\bar{1}}}{n^{2}}\sum_{i=1}^{n}b_{\ell}^{2}K^{2}\hat{\kappa}^{2}\mblue{Var(A|\bar{\mathbf{X}})}\left(1-\frac{\hat{\pi}}{\pi}\right)^{2}(\hat{\mu}_{1}-\mu_{1})^{2}|\mathbf{X}_{\text{all}},\hat{\kappa}\right].\label{eq:small-var-term}
\end{align}

Section \ref{subsec:EK-sketch} shows that the expectation of Line
(\ref{eq:apply-EK-sketch}) is $\lesssim1/(nh^{d})$ and Section \ref{subsec:smaller-var-term}
shows that the expectation of Line (\ref{eq:small-var-term}) is $\lesssim1/(nh^{d})$.

\subsubsection{\label{subsec:EK-sketch}Showing the expectation of Line (\ref{eq:apply-EK-sketch})
is $\lesssim1/(nh^{d})$ \emph{}}

To study Line (\ref{eq:apply-EK-sketch}), it will be helpful to
introduce some abbreviations. Let $\epsilon_{\hat{\pi}i}:=\hat{\pi}(X_{i})-\pi(X_{i})$,
and $\epsilon_{\hat{\mu}i}:=\hat{\mu}_{1}(X_{i})-\mu_{1}(X_{i})$.
Line (\ref{eq:apply-EK-sketch}) becomes 
\begin{align}
 & Var\left[\frac{\hat{\bar{1}}}{n}\sum_{i=1}^{n}b_{\ell}(X_{i})K(X_{i})\hat{\kappa}(X_{i})\epsilon_{\hat{\pi}i}\epsilon_{\hat{\mu}i}|\mathbf{X}_{\text{all}},\hat{\kappa}_{i}\right]\nonumber \\
 & \lesssim\frac{\hat{1}}{n^{2}}\sum_{i=1}^{n}K(X_{i})^{2}Var\left(\epsilon_{\hat{\pi}i}\epsilon_{\hat{\mu}i}|\mathbf{X}_{\text{all}}\right)\label{eq:diag-var}\\
 & \hspace{1em}+\frac{\hat{1}}{n^{2}}\sum_{i=1}^{n}\sum_{j\in\{1,\dots n\}\backslash i}|K(X_{i})K(X_{j})|\;Cov\left(\epsilon_{\hat{\pi}i}\epsilon_{\hat{\mu}i},\epsilon_{\hat{\pi}j}\epsilon_{\hat{\mu}j}|\mathbf{X}_{\text{all}}\right),\label{eq:off-diag-cov}
\end{align}
by the definition of $b$.

To study these variance and covariance terms, we use the fact that
for any four variables $A_{1},A_{2},B_{1},B_{2}$ satisfying $(A_{1},A_{2})\perp(B_{1},B_{2})$,
we have
\begin{align}
 & Cov(A_{1}B_{1},A_{2}B_{2})\nonumber \\
 & =Cov(A_{1},A_{2})Cov(B_{1},B_{2})+\mathbb{E}(A_{1})\mathbb{E}(A_{2})Cov(B_{1},B_{2})+Cov(A_{1},A_{2})\mathbb{E}(B_{1})\mathbb{E}(B_{2}).\label{eq:cov-abab}
\end{align}
A corollary of Eq (\ref{eq:cov-abab}) is that
\begin{equation}
Var(A_{1}B_{1})=Var(A_{1})Var(B_{1})+\mathbb{E}(A_{1})^{2}Var(B_{1})+Var(A_{1})\mathbb{E}(B_{1})^{2}.\label{eq:Var-AB}
\end{equation}
Applying Eq (\ref{eq:Var-AB}), we see that Line (\ref{eq:diag-var})
equals

\begin{align}
 & \frac{\hat{1}}{n^{2}}\sum_{i=1}^{n}K(X_{i})^{2}\left\{ Var(\epsilon_{\hat{\pi}i}|\mathbf{X}_{\text{all}})Var(\epsilon_{\hat{\mu}i}|\mathbf{X}_{\text{all}})\right.\nonumber \\
 & \hspace{1em}\hspace{1em}\hspace{1em}\left.+\mathbb{E}(\epsilon_{\hat{\pi}i}|\mathbf{X}_{\text{all}})^{2}Var(\epsilon_{\hat{\mu}i}|\mathbf{X}_{\text{all}})+Var(\epsilon_{\hat{\pi}i}|\mathbf{X}_{\text{all}})\mathbb{E}(\epsilon_{\hat{\mu}i}|\mathbf{X}_{\text{all}})^{2}\right\} \nonumber \\
 & \lesssim\frac{1}{n^{2}}\sum_{i=1}^{n}K(X_{i})^{2} &  & \text{def of }\hat{1}.\label{eq:diag-terms}
\end{align}
For the off-diagonal terms in Line (\ref{eq:off-diag-cov}), we first
note that for any $i,j\in\{1,\dots n\}$ satisfying $i\neq j$ we
have
\begin{align}
 & \hat{1}Cov\left(\epsilon_{\hat{\pi}i}\epsilon_{\hat{\mu}i},\epsilon_{\hat{\pi}j}\epsilon_{\hat{\mu}j}|\mathbf{X}_{\text{all}}\right)\nonumber \\
 & =\hat{1}Cov\left(\epsilon_{\hat{\pi}i},\epsilon_{\hat{\pi}j}|\mathbf{X}_{\text{all}}\right)Cov\left(\epsilon_{\hat{\mu}i},\epsilon_{\hat{\mu}j}|\mathbf{X}_{\text{all}}\right)\nonumber \\
 & \hspace{1em}+\hat{1}Cov\left(\epsilon_{\hat{\pi}i},\epsilon_{\hat{\pi}j}|\mathbf{X}_{\text{all}}\right)\mathbb{E}\left(\epsilon_{\hat{\mu}i}|\mathbf{X}_{\text{all}}\right)^{2}+\hat{1}\mathbb{E}\left(\epsilon_{\hat{\pi}i}|\mathbf{X}_{\text{all}}\right)^{2}Cov\left(\epsilon_{\hat{\mu}i},\epsilon_{\hat{\mu}j}|\mathbf{X}_{\text{all}}\right) &  & \text{by Eq \eqref{eq:cov-abab}},\nonumber \\
 & \lesssim\hat{1}Cov\left(\epsilon_{\hat{\pi}i},\epsilon_{\hat{\pi}j}|\mathbf{X}_{\text{all}}\right)Cov\left(\epsilon_{\hat{\mu}i},\epsilon_{\hat{\mu}j}|\mathbf{X}_{\text{all}}\right)\nonumber \\
 & \hspace{1em}+\hat{1}Cov\left(\epsilon_{\hat{\pi}i},\epsilon_{\hat{\pi}j}|\mathbf{X}_{\text{all}}\right)+\hat{1}Cov\left(\epsilon_{\hat{\mu}i},\epsilon_{\hat{\mu}j}|\mathbf{X}_{\text{all}}\right) &  & \text{def of \ensuremath{\hat{1}}},\label{eq:cov-gam-pi}
\end{align}
where
\begin{align}
 & \hat{1}Cov(\epsilon_{\hat{\pi}i},\epsilon_{\hat{\pi}j}|\mathbf{X}_{\text{all}})\nonumber \\
 & \hspace{1em}=\hat{1}Cov(\epsilon_{\hat{\pi}i},\epsilon_{\hat{\pi}j}|\mathbf{X}_{\text{all}})1\left(\|X_{i}-X_{j}\|\leq ck^{-1/d}\right) &  & \text{Assm \ref{assu:local-independence}}\nonumber \\
 & \hspace{1em}\leq\hat{1}Var(\epsilon_{\hat{\pi}i}|\mathbf{X}_{\text{all}})^{1/2}Var(\epsilon_{\hat{\pi}j}|\mathbf{X}_{\text{all}})^{1/2}1\left(\|X_{i}-X_{j}\|\leq ck^{-1/d}\right) &  & \text{Cauchy Schwartz}\nonumber \\
 & \hspace{1em}\lesssim\frac{k}{n}1\left(\|X_{i}-X_{j}\|\leq ck^{-d}\right), &  & \text{def of \ensuremath{\hat{1}}}.\label{eq:cov-pi}
\end{align}
By the same reasoning, 
\begin{equation}
\hat{\bar{1}}Cov(\epsilon_{\hat{\mu}i},\epsilon_{\hat{\mu}j}|\mathbf{X}_{\text{all}})\lesssim\frac{k}{n}1\left(\|X_{i}-X_{j}\|\leq ck^{-1/d}\right).\label{eq:cov-gam}
\end{equation}
Plugging Eqs (\ref{eq:cov-pi}) \& (\ref{eq:cov-gam}) into Eq (\ref{eq:cov-gam-pi}),
we get
\begin{align}
\hat{1}Cov\left(\epsilon_{\hat{\pi}i}\epsilon_{\hat{\mu}i},\epsilon_{\hat{\pi}j}\epsilon_{\hat{\mu}j}|\mathbf{X}_{\text{all}}\right) & \lesssim\left(\frac{k^{2}}{n^{2}}+2\frac{k}{n}\right)1\left(\|X_{i}-X_{j}\|\leq ck^{-1/d}\right).\label{eq:Cov-4e}
\end{align}

Finally, plugging Eqs (\ref{eq:diag-terms}) \& (\ref{eq:Cov-4e})
into Lines (\ref{eq:diag-var}) \& (\ref{eq:off-diag-cov}), we see
that the expectation of the expectation of Line (\ref{eq:diag-var})
plus Line (\ref{eq:off-diag-cov}) is
\begin{align*}
 & \lesssim\mathbb{E}\left[\frac{1}{n^{2}}\sum_{i=1}^{n}K(X_{i})^{2}\right.\\
 & \hspace{1em}\hspace{1em}\left.+\frac{1}{n^{2}}\sum_{i=1}^{n}\sum_{j\in\{1,\dots n\}\backslash i}|K(X_{i})K(X_{j})|\;\frac{k}{n}1\left(\|X_{i}-X_{j}\|\leq ck^{-1/d}\right)\right]\\
 & \lesssim\frac{1}{n^{2}h^{2d}}\sum_{i=1}^{n}\mathbb{E}\left[\mathcal{I}(X_{i})\right]\\
 & \hspace{1em}\hspace{1em}+\frac{k}{n^{3}h^{2d}}\sum_{i=1}^{n}\sum_{j\in\{1,\dots n\}\backslash i}\mathbb{E}\left[\mathcal{I}(X_{i})\mathbb{E}\left\{ 1\left(\|X_{i}-X_{j}\|\leq ck^{-1/d}\right)|X_{i}\right\} \right] &  & \text{\text{Lemma \ref{lem:(Bounded-weights)}.\ref{enu:K1-vals}}}\\
 & \lesssim\frac{1}{n^{2}h^{2d}}\sum_{i=1}^{n}\mathbb{E}\left[\mathcal{I}(X_{i})\right]\\
 & \hspace{1em}\hspace{1em}+\frac{k}{n^{3}h^{2d}}\sum_{i=1}^{n}\sum_{j\in\{1,\dots n\}\backslash i}\mathbb{E}\left[\mathcal{I}(X_{i})k^{-1}\right] &  & \text{Assm \ref{assu:uniform-density}}\\
 & \lesssim\frac{1}{nh^{d}}+\frac{1}{nh^{d}}. &  & \text{Lemma \ref{lem:(Bounded-weights)}.\ref{enu:K1-vals}.}
\end{align*}
Thus, the expectation of Line (\ref{eq:apply-EK-sketch}) is $\lesssim1/(nh^{d})$
as well.

\subsubsection{\label{subsec:smaller-var-term}Showing the expectation of Line (\ref{eq:small-var-term})
is $\lesssim1/(nh^{d})$}

The expectation of Line (\ref{eq:small-var-term}) is
\begin{align*}
 & \lesssim\mathbb{E}\mathbb{E}\left[\frac{\hat{1}}{n^{2}}\sum_{i=1}^{n}K^{2}\hat{\kappa}^{2}\left(1-\frac{\hat{\pi}}{\pi}\right)^{2}(\hat{\mu}_{1}-\mu_{1})^{2}|\mathbf{X}_{\text{all}},\hat{\kappa}\right] &  & \text{def of }b\\
 & =\mathbb{E}\left[\frac{\hat{1}}{n^{2}}\sum_{i=1}^{n}K^{2}\hat{\kappa}^{2}\mathbb{E}\left\{ \left\{ \mblue{\frac{\pi}{\pi}}\left(1-\frac{\hat{\pi}}{\pi}\right)\right\} ^{2}|\mathbf{X}_{\text{all}}\right\} \mathbb{E}\left\{ (\hat{\mu}_{1}-\mu_{1})^{2}|\mathbf{X}_{\text{all}}\right\} \right] &  & \text{4-way independence}\\
 & =\mathbb{E}\left[\frac{\hat{1}}{n^{2}}\sum_{i=1}^{n}K^{2}\hat{\kappa}^{2}\mathbb{E}\left\{ \mblue{\frac{1}{\pi^{2}}\left(\pi-\hat{\pi}\right)^{2}}|\mathbf{X}_{\text{all}}\right\} \mathbb{E}\left\{ (\hat{\mu}_{1}-\mu_{1})^{2}|\mathbf{X}_{\text{all}}\right\} \right]\\
 & =\mathbb{E}\left[\frac{\hat{1}}{n^{2}}\sum_{i=1}^{n}K^{2}\mathbb{E}\left\{ \mblue{\left(\pi-\hat{\pi}\right)^{2}}|\mathbf{X}_{\text{all}}\right\} \mathbb{E}\left\{ (\hat{\mu}_{1}-\mu_{1})^{2}|\mathbf{X}_{\text{all}}\right\} \right] &  & \text{Assm \ref{assu:positivity}}\\
 & \lesssim\frac{1}{n^{2}}\sum_{i=1}^{n}\mathbb{E}\left[K(X_{i})^{2}\right] &  & \text{def of }\hat{1}\\
 & \lesssim\frac{1}{n^{2}h^{2d}}\sum_{i=1}^{n}\mathbb{E}\left[\mathcal{I}(X_{i})\right] &  & \text{Lemma \ref{lem:(Bounded-weights)}.\ref{enu:K1-vals}}\\
 & \lesssim\frac{1}{nh^{d}} &  & \text{Lemma \ref{lem:(Bounded-weights)}.\ref{enu:K1-vals}}.
\end{align*}

\subsection{\label{subsec:Proof-of-3way}Bounding Line (\ref{eq:L-sum}) under
the conditions of Point \ref{enu:(3-way-CF)-If}}

Here, we redefine $\hat{1}$ to additionally indicate that $Var(\hat{\pi}(x)^{2}|\mathbf{X}_{\text{all}})\leq ck/n$
for all $x$. By assumption, we still have $\text{Pr}(\hat{1}=1)\rightarrow1$.

We can add and subtract $\kappa(X)$ to see that the summands in Line
(\ref{eq:L-sum}) are 
\begin{align}
 & \leq\hat{1}|\bar{\mathbb{P}}_{n}\left\{ b_{\ell}K\left\{ \hat{\kappa}-\kappa\right\} \hat{\pi}A\left(\hat{\pi}^{-1}-\pi^{-1}\right)(\hat{\mu}_{1}-\mu_{1})\right\} |\label{eq:k-k+k}\\
 & \hspace{1em}\;\;\;\;\;\;+\hat{1}|\bar{\mathbb{P}}_{n}\left\{ b_{\ell}K\kappa\hat{\pi}A\left(\hat{\pi}^{-1}-\pi^{-1}\right)(\hat{\mu}_{1}-\mu_{1})\right\} |.\label{eq:k-true}
\end{align}
Since $\kappa(X)\perp\hat{\pi}(X)|X$, Line (\ref{eq:k-true}) can
be studied in the same way as in Sections \ref{subsec:Bounding-bias-DR}
\& \ref{subsec:Bounding-var-DR}, producing the same bound. We tackle
Line (\ref{eq:k-k+k}) by bounding its second moment, which is equal
to 
\begin{align}
 & \mathbb{E}\left[\;\;\;\;\;\;\;\;\;\;\;\mathbb{E}\left\{ \hat{1}\bar{\mathbb{P}}_{n}\left\{ b_{\ell}K\left\{ \hat{\kappa}-\kappa\right\} A\left(1-\frac{\hat{\pi}}{\pi}\right)(\hat{\mu}_{1}-\mu_{1})\right\} ^{2}|\mathbf{X}_{\text{all}}\right\} \;\;\right]\nonumber \\
 & =\mathbb{E}\left[\;\;\;\;\;\;\mathbb{E}\left\{ \hat{1}\bar{\mathbb{P}}_{n}\left\{ b_{\ell}K\left\{ \hat{\kappa}-\kappa\right\} A\left(1-\frac{\hat{\pi}}{\pi}\right)(\hat{\mu}_{1}-\mu_{1})\right\} |\mathbf{X}_{\text{all}}\right\} ^{2}\;\;\right]\label{eq:cond-bias-DR-3}\\
 & \;\;+\mathbb{E}\left[\;Var\left\{ \hat{1}\bar{\mathbb{P}}_{n}\left\{ b_{\ell}K\left\{ \hat{\kappa}-\kappa\right\} A\left(1-\frac{\hat{\pi}}{\pi}\right)(\hat{\mu}_{1}-\mu_{1})\right\} |\mathbf{X}_{\text{all}}\right\} \;\;\right].\label{eq:cond-var-DR-3}
\end{align}
For Line (\ref{eq:cond-bias-DR-3}), since $\hat{\kappa}(x)=1-\hat{\pi}(x),$
we have 
\[
\hat{\kappa}(x)-\kappa(x)=1-\hat{\pi}(x)-(1-\pi(x))=\pi(x)-\hat{\pi}(x),
\]
which implies that the inner expectation in Line (\ref{eq:cond-bias-DR-3})
equals
\begin{align*}
 & \frac{\hat{1}}{n}\sum_{i=1}^{n}b_{\ell}(X_{i})K(X_{i})\mathbb{E}\left\{ \hat{\mu}_{1}(X_{i})-\mu_{1}(X_{i})|X_{i}\right\} \\
 & \hspace{1em}\hspace{1em}\hspace{1em}\times\mathbb{E}\left\{ \left\{ \pi(X_{i})-\hat{\pi}(X_{i})\right\} A_{i}\left(1-\frac{\hat{\pi}(X_{i})}{\pi(X_{i})}\right)|\mathbf{X}_{\text{all}}\right\}  &  & \hat{\mu}\perp\hat{\pi}\\
 & =\frac{\hat{1}}{n}\sum_{i=1}^{n}b_{\ell}(X_{i})K(X_{i})\mathbb{E}\left\{ \hat{\mu}_{1}(X_{i})-\mu_{1}(X_{i})|X_{i}\right\} \\
 & \hspace{1em}\hspace{1em}\hspace{1em}\times\mathbb{E}\left\{ \left\{ \pi(X_{i})-\hat{\pi}(X_{i})\right\} ^{2}|\mathbf{X}_{\text{all}}\right\}  &  & \text{I.E. over \ensuremath{\hat{\pi}}}\\
 & \lesssim k^{-s_{\mu}/d}\left(k^{-2s_{\pi}/d}+\frac{k}{n}\right)\frac{1}{n}\sum_{i=1}^{n}|K(X_{i})| &  & \text{def of }\hat{1}\text{ \& }b_{\ell}.
\end{align*}
Thus, Line (\ref{eq:cond-bias-DR-3}) is
\begin{align}
 & \lesssim k^{-2s_{\mu}/d}\left(k^{-2s_{\pi}/d}+\frac{k}{n}\right)^{2}\mathbb{E}\left[\left\{ \frac{1}{n}\sum_{i=1}^{n}|K(X_{i})|\right\} ^{2}\right]\nonumber \\
 & \lesssim k^{-2s_{\mu}/d}\left(k^{-2s_{\pi}/d}+\frac{k}{n}\right)^{2} &  & \text{Lemma \ref{lem:(Bounded-weights)}.\ref{enu:K2}}\label{eq:DR3-bias2}
\end{align}
As in Section \ref{subsec:Bounding-var-DR}, Line (\ref{eq:cond-var-DR-3})
is the expected value of 
\begin{align}
 & Var\left[\;\;\;\;\;\;\;\;\,\;\;\;\;\hat{\bar{1}}\bar{\mathbb{P}}_{n}\left\{ b_{\ell}K\left(\pi-\hat{\pi}\right)A\left(1-\frac{\hat{\pi}}{\pi}\right)(\hat{\mu}_{1}-\mu_{1})\right\} \;|\;\mathbf{X}_{\text{all}}\;\right]\nonumber \\
 & =Var\left[\;\;\mathbb{E}\left[\;\hat{\bar{1}}\bar{\mathbb{P}}_{n}\left\{ b_{\ell}K\left(\pi-\hat{\pi}\right)A\left(1-\frac{\hat{\pi}}{\pi}\right)(\hat{\mu}_{1}-\mu_{1})\right\} \;|\;\mathbf{X}_{\text{all}},\hat{\pi},\hat{\mu}_{1}\;\;\right]\;\;|\;\;\mathbf{X}_{\text{all}}\right]\nonumber \\
 & \hspace{1em}+\mathbb{E}\left[Var\left[\hat{\bar{1}}\bar{\mathbb{P}}_{n}\left\{ b_{\ell}K\left(\pi-\hat{\pi}\right)A\left(1-\frac{\hat{\pi}}{\pi}\right)(\hat{\mu}_{1}-\mu_{1})\right\} \;|\;\mathbf{X}_{\text{all}},\hat{\pi},\hat{\mu}_{1}\;\;\right]\;\;|\;\;\mathbf{X}_{\text{all}}\right] &  & \text{Law of total var}\nonumber \\
 & \leq Var\left[\;\frac{\mblue{\hat{1}}}{n}\;\sum_{i=1}^{n}b_{\ell}K\mblue{\left(\pi-\hat{\pi}\right)^{2}}(\hat{\mu}_{1}-\mu_{1})|\mathbf{X}_{\text{all}}\right]\nonumber \\
 & \hspace{1em}+\mathbb{E}\left[\frac{\hat{1}}{n^{2}}\sum_{i=1}^{n}b_{\ell}^{2}K^{2}\left(\hat{\pi}-\pi\right)^{2}\mblue{Var(A|\bar{\mathbf{X}})}\left(1-\frac{\hat{\pi}}{\pi}\right)^{2}(\hat{\mu}_{1}-\mu_{1})^{2}|\mathbf{X}_{\text{all}}\right].\nonumber \\
 & =\hat{1}Var\left[\;\frac{1}{n}\;\sum_{i=1}^{n}b_{\ell}(X_{i})K(X_{i})\epsilon_{i\pi}^{2}\epsilon_{i\mu}|\mathbf{X}_{\text{all}}\right]\label{eq:apply-EK-sketch-3}\\
 & \hspace{1em}+\hat{1}\mathbb{E}\left[\frac{1}{n^{2}}\sum_{i=1}^{n}b_{\ell}(X_{i})K(X_{i})\epsilon_{i\pi}^{2}Var(A|\bar{\mathbf{X}})\left(1-\frac{\hat{\pi}(X_{i})}{\pi(X_{i})}\right)^{2}\epsilon_{i\mu}^{2}|\mathbf{X}_{\text{all}}\right],\label{eq:small-var-term-3}
\end{align}
where the last equality is by definition of $\epsilon_{i\pi}$ and
$\epsilon_{i\mu}$. Since $\hat{1}Var(\epsilon_{\hat{\pi}i}^{2}|\mathbf{X}_{\text{all}})\leq ck/n$
and $\epsilon_{\hat{\pi}i}^{2}\leq1$, we can follow the same steps
as in Section \ref{subsec:EK-sketch} (with $\left(\epsilon_{\hat{\pi}i},\epsilon_{\hat{\pi}j}\right)$
replaced throughout by $\left(\epsilon_{\hat{\pi}i}^{2},\epsilon_{\hat{\pi}j}^{2}\right)$)
to see that Line (\ref{eq:apply-EK-sketch-3}) has expectation $\lesssim1/(nh^{d})$.
Similarly, since $\epsilon_{\hat{\pi}i}^{2}\leq1$, we can follow
the same steps as in Section \ref{subsec:smaller-var-term} to see
that Line (\ref{eq:small-var-term-3}) has expectation $\lesssim1/(nh^{d})$.
Thus, by Markov's Inequality and Eq (\ref{eq:DR3-bias2}), we see
that Line (\ref{eq:k-k+k}) is 
\begin{align*}
 & \lesssim_{\mathbb{P}}k^{-s_{\mu}/d}\left(k^{-2s_{\pi}/d}+\frac{k}{n}\right)+\sqrt{\frac{1}{nh^{d}}}\\
 & \leq k^{-(s_{\mu}-s_{\pi})/d}+\frac{k^{1-s_{\mu}/d}}{n}+\sqrt{\frac{1}{nh^{d}}}.
\end{align*}

\subsection{\label{subsec:Proof-of-2way}Bounding Line (\ref{eq:L-sum}) under
the conditions of Point \ref{enu:(2-way-CF)-If}}

If we assume only that $(\hat{\pi},\hat{\mu}_{1})\perp\text{\textbf{Z}}$,
then
\begin{align}
 & \mathbb{E}\left[\left.\vphantom{\frac{1}{1}}\hat{\bar{1}}|\bar{\mathbb{P}}_{n}\left\{ b_{\ell}K\hat{\kappa}\hat{\pi}A\left(\hat{\pi}^{-1}-\pi^{-1}\right)(\hat{\mu}_{1}-\mu_{1})\right\} |\;\;\right|\mathbf{X}_{\text{all}}\right]\nonumber \\
 & \lesssim\hat{1}\bar{\mathbb{P}}_{n}\left\{ |K|\;\mathbb{E}\left(\left.\vphantom{a_{3_{3}}^{3^{3}}}\left|1-\hat{\pi}/\pi\right|\;\left|\hat{\mu}_{1}-\mu_{1}\right|\hspace{1em}\right|\;\mathbf{X}_{\text{all}}\right)\right\}  &  & A,b_{\ell}(x),\hat{\kappa}(x)\lesssim1\nonumber \\
 & \lesssim\hat{1}\bar{\mathbb{P}}_{n}\left\{ |K|\;\mathbb{E}\left(\pi\left|1-\hat{\pi}/\pi\right|\;\left|\hat{\mu}_{1}-\mu_{1}\right|\hspace{1em}|\;\mathbf{X}_{\text{all}}\right)\right\}  &  & \text{from \ensuremath{1/\pi(x)\lesssim1}}\nonumber \\
 & \leq\hat{1}\bar{\mathbb{P}}_{n}\left\{ |K|\;\mathbb{E}\left(\left(\pi-\hat{\pi}\right)^{2}|\mathbf{X}_{\text{all}}\right)^{1/2}\mathbb{E}\left(\left(\hat{\mu}_{1}-\mu_{1}\right)^{2}|\mathbf{X}_{\text{all}}\right)^{1/2}\right\}  &  & \text{Cauchy Schwartz}\nonumber \\
 & \lesssim\left(\frac{k}{n}+k^{-2s_{\mu}/d}\right)^{1/2}\left(\frac{k}{n}+k^{-2s_{\mu}/d}\right)^{1/2}\frac{1}{n}\sum_{i=1}^{n}|K(X_{i})| &  & \text{(\ensuremath{\hat{\pi}},\ensuremath{\hat{\mu}_{1}})\ensuremath{\perp\text{\textbf{Z}}}, and def. of \ensuremath{\hat{1}}}\nonumber \\
 & \lesssim\left(\sqrt{\frac{k}{n}}+k^{-s_{\mu}/d}\right)\left(\sqrt{\frac{k}{n}}+k^{-s_{\mu}/d}\right)\frac{1}{n}\sum_{i=1}^{n}|K(X_{i})|\label{eq:sqrt-ab}\\
 & \lesssim_{\mathbb{P}}\frac{k}{n}+\frac{k^{1/2-s_{\mu}/d}}{\sqrt{n}}+\frac{k^{1/2-s_{\pi}/d}}{\sqrt{n}}+k^{-\left(s_{\mu}+s_{\pi}\right)/d} &  & \text{Lemma \ref{lem:(Bounded-weights)}.\ref{enu:K1-vals} + Markov's Ineq.}\nonumber 
\end{align}
Above, Line \ref{eq:sqrt-ab} comes from the fact that $\sqrt{a+b}\leq\sqrt{a}+\sqrt{b}$
for any two positive constants $a,b$.

\section{\label{sec:oracle-proof}Proof of Theorem \ref{thm:oracle-bnd}}

First we remark that the ``reproducing'' property for local polynomial
estimators still holds even when $\hat{\nu}$ is pre-estimated. If
$f$ is a $\lfloor s_{\tau}\rfloor$ order polynomial, then there
exists a set of coefficients $\beta$ such that $f(x)=b(x)^{\top}\beta$.
Thus,
\begin{align}
f(x_{\text{new}})=b(x_{\text{new}})^{\top}\beta= & b(x_{\text{new}})^{\top}\hat{\bar{\mathbf{Q}}}^{-1}\sum_{i=1}^{n}b(X_{i})K(X_{i})\hat{\nu}(X_{i})b(X_{i})^{\top}\beta\nonumber \\
= & b(x_{\text{new}})^{\top}\hat{\bar{\mathbf{Q}}}^{-1}\sum_{i=1}^{n}b(X_{i})K(X_{i})\hat{\nu}(X_{i})f(X_{i})\nonumber \\
= & \sum_{i=1}^{n}\hat{\bar{w}}(X_{i})f(X_{i}).\label{eq:poly-repro}
\end{align}
Let $\tau(X_{i};x_{\text{new}})$ be the $\lfloor s_{\tau}\rfloor$
order Taylor approximation of $\tau$ at $x_{\text{new}}$. It follows
from Eq (\ref{eq:poly-repro}) that 
\begin{equation}
\frac{1}{n}\sum_{i=1}^{n}\hat{\bar{w}}(X_{i})\tau(X_{i};x_{\text{new}})=\tau(x_{\text{new}};x_{\text{new}})=\tau(x_{\text{new}}),\label{eq:expand-around}
\end{equation}
where the second equality comes from the fact that the Taylor approximation
is exact at $x_{\text{new}}$. 

Conditional on $\hat{\nu}$ and $\bar{\mathbf{X}}$, the oracle bias
is
\begin{align*}
 & \mathbb{E}\left(\left\{ \hat{\bar{\tau}}_{\text{oracle}}(x_{\text{new}})-\tau(x_{\text{new}})\right\} |\hat{\nu},\bar{\mathbf{X}}\right)\\
 & =\frac{1}{n}\sum_{i=1}^{n}\hat{\bar{w}}(X_{i})\mathbb{E}\left(f_{\text{DR},\theta}(Z_{i})|\hat{\nu},\bar{\mathbf{X}}\right)-\tau(x_{\text{new}})\\
 & =\frac{1}{n}\sum_{i=1}^{n}\hat{\bar{w}}(X_{i})\tau(X_{i})-\tau(x_{\text{new}}) &  & \text{\ensuremath{\hat{\nu}\perp f_{\text{DR},\theta}(Z_{i})|\bar{\mathbf{X}}}}\\
 & =\frac{1}{n}\sum_{i=1}^{n}\hat{\bar{w}}(X_{i})\left\{ \tau(X_{i})-\tau(X_{i};x_{\text{new}})\right\}  &  & \text{Eq \eqref{eq:expand-around}}\\
 & \leq\frac{1}{n}\sum_{i=1}^{n}|\hat{\bar{w}}(X_{i})|\;\left|\tau(X_{i})-\tau(X_{i};x_{\text{new}})\right|\;\left|\mathcal{I}(X_{i})\right| &  & \text{definitions of }\hat{\bar{w}}\text{ \& }\mathcal{I}\\
 & \leq\frac{1}{n}\sum_{i=1}^{n}|\hat{\bar{w}}(X_{i})|\;\left\Vert X_{i}-x_{\text{new}}\right\Vert ^{s_{\tau}}\;\left|\mathcal{I}(X_{i})\right| &  & \text{Assm \ref{assu:(Oracle-smoothness)}}\\
 & \leq\frac{h^{s_{\tau}}}{n}\sum_{i=1}^{n}|\hat{\bar{w}}(X_{i})| &  & \text{definition of }\mathcal{I}\\
 & \lesssim_{\mathbb{P}}h^{s_{\tau}} &  & \text{Lemma \ref{lem:(Bounded-weights)}.\ref{enu:exp-w} + Markov's Ineq.}
\end{align*}
The conditional variance of the oracle is 
\begin{align*}
Var\left(\hat{\bar{\tau}}_{\text{oracle}}(x_{\text{new}})|\hat{\nu},\bar{\mathbf{X}}\right) & =\frac{1}{n^{2}}\sum_{i=1}^{n}\hat{\bar{w}}(X_{i})^{2}Var(f_{\text{DR},\theta}(Z_{i})|X_{i})\\
 & \lesssim\frac{1}{n^{2}}\sum_{i=1}^{n}\hat{\bar{w}}(X_{i})^{2} &  & \text{Assms \ref{assu:regularity} \& \ref{assu:positivity} }\\
 & \lesssim_{\mathbb{P}}\frac{1}{nh^{d}} &  & \text{Lemma \ref{lem:(Bounded-weights)}.\ref{enu:exp-w2} + Markov's Ineq.}
\end{align*}
This, combined with a conditional version of Markov's Inequality (see
Lemma 2 of \citealp{Kennedy2022-da}), shows the result.

\section{\label{sec:Variance-of-POs}Conditional Variance of Pseudo-outcomes}

For the pseudo-outcome function $f_{\text{U,\ensuremath{\theta}}}$,
assume that $A\perp Y|X$ and $Var(Y|X)=\sigma^{2}$. It follows from
$A\perp Y|X$ that $\eta(X)=\mu_{1}(X)=\mu_{0}(X)$ and $Var(Y|X,A)=Var(Y|X)=\sigma^{2}$.
Thus, 
\begin{align*}
Var\left(f_{\text{U},\theta}(A,X,Y)|X\right) & =Var\left(\frac{Y-\eta(X)}{A-\pi(X)}|X\right)\\
 & =\mathbb{E}\left[Var\left(\frac{Y-\eta(X)}{A-\pi(X)}|X,A\right)|X\right]\\
 & \hspace{1em}\hspace{1em}\hspace{1em}+Var\left[\mathbb{E}\left(\frac{Y-\eta(X)}{A-\pi(X)}|X,A\right)|X\right] &  & \text{Law of Total Var}\\
 & =\mathbb{E}\left[\left(A-\pi(X)\right)^{-2}Var\left(Y|X,A\right)|X\right]\\
 & \hspace{1em}\hspace{1em}\hspace{1em}+Var\left[\frac{\mu_{A}(X)-\eta(X)}{A-\pi(X)}|X\right]\\
 & =\mathbb{E}\left[\left(A-\pi(X)\right)^{-2}|X\right]\sigma^{2}\\
 & \hspace{1em}\hspace{1em}\hspace{1em}+0 &  & \text{from }\eta(X)=\mu_{A}(X)\\
 & =\left\{ \frac{\pi(X)}{\left\{ 1-\pi(X)\right\} ^{2}}+\frac{1-\pi(X)}{\left\{ 0-\pi(X)\right\} ^{2}}\right\} \sigma^{2}\\
 & =\left\{ \frac{\pi^{3}+\left\{ 1-\pi\right\} ^{3}}{\left(1-\pi\right)^{2}\pi^{2}}\right\} \sigma^{2}.
\end{align*}

For $f_{\text{OR},\theta}(Z)$, if $A\perp Y|X$ and $\mathbb{E}\left[\left(Y-\eta(X)\right)^{2}|X\right]=\sigma^{2}$
then 
\begin{align*}
Var\left(f_{\text{OR},\theta}(Z)|X\right) & =\nu(X)^{-2}Var\left[\left(A-\pi(X)\right)\left(Y-\eta(X)\right)|X\right]\\
 & =\nu(X)^{-2}\mathbb{E}\left[\left(A-\pi(X)\right)^{2}\left(Y-\eta(X)\right)^{2}|X\right]\\
 & \hspace{1em}\hspace{1em}\hspace{1em}\hspace{1em}-\mathbb{E}\left[\left(A-\pi(X)\right)|X\right]^{2}\mathbb{E}\left[\left(Y-\eta(X)\right)|X\right]^{2}\\
 & =\nu(X)^{-2}\mathbb{E}\left[\left(A-\pi(X)\right)^{2}|X\right]\mathbb{E}\left[\left(Y-\eta(X)\right)^{2}|X\right]\\
 & =\nu(X)^{-1}\sigma^{2}.
\end{align*}

For $f_{\text{DR,\ensuremath{\theta}}}$, if $Var(Y|A,X)=\sigma^{2}$
we have
\begin{align*}
 & Var\left(f_{\text{DR,\ensuremath{\theta}}}(A,X,Y)|X\right)\\
 & =Var\left[\mu_{1}(X)-\mu_{0}(X)+\frac{A-\pi(X)}{\pi(X)(1-\pi(X))}\left(Y-\mu_{A}(X)\right)|X\right]\\
 & =\nu(X)^{-2}Var\left[\left(A-\pi(X)\right)\left(Y-\mu_{A}(X)\right)|X\right]\\
 & =\nu(X)^{-2}\left[Var\left\{ \vphantom{\frac{1}{1}}\left(A-\pi(X)\right)\mathbb{E}\left\{ \vphantom{\frac{1}{1}}Y-\mu_{A}(X)|A,X\right\} |X\right\} \right.\\
 & \hspace{1em}\hspace{1em}\hspace{1em}\hspace{1em}\left.\mathbb{E}\left\{ \vphantom{\frac{1}{1}}\left(A-\pi(X)\right)^{2}Var\left\{ \vphantom{\frac{1}{1}}Y-\mu_{A}(X)|A,X\right\} |X\right\} \right] &  & \text{Law of Total Var}\\
 & =\nu(X)^{-2}\left[0\right.\\
 & \hspace{1em}\hspace{1em}\hspace{1em}\hspace{1em}\left.\mathbb{E}\left\{ \vphantom{\frac{1}{1}}\left(A-\pi(X)\right)^{2}|X\right\} \sigma^{2}\right]\\
 & =\nu(X)^{-1}\sigma^{2}\\
 & =\kappa(X)^{-1}\pi(X)^{-1}\sigma^{2}.
\end{align*}
\textcolor{blue}{}\textcolor{purple}{}
\end{document}